\newtheorem{corollary}{Corollary}
\newtheorem{theorem}{Theorem}
\begin{document}

\title{Comparison of Superconducting NISQ Architectures}

\author{\IEEEauthorblockN{Benjamin Rempfer}
\IEEEauthorblockA{\textit{Lincoln Laboratory} \\
\textit{Massachusetts Institute of Technology}\\
Lexington, MA USA \\
Benjamin.Rempfer@ll.mit.edu}
\and
\IEEEauthorblockN{Kevin M. Obenland}
\IEEEauthorblockA{\textit{Lincoln Laboratory} \\
\textit{Massachusetts Institute of Technology}\\
Lexington, MA USA \\
Kevin.Obenland@ll.mit.edu}
\thanks{DISTRIBUTION STATEMENT A. Approved for public release. Distribution is unlimited.

This material is based upon work supported by the Under Secretary of Defense for
Research and Engineering under Air Force Contract No. FA8702-15-D-0001. Any opinions,
findings, conclusions or recommendations expressed in this material are those of the
author(s) and do not necessarily reflect the views of the Under Secretary of Defense
for Research and Engineering.

\textcopyright 2024 Massachusetts Institute of Technology.


Delivered to the U.S. Government with Unlimited Rights, as defined in DFARS Part
252.227-7013 or 7014 (Feb 2014). Notwithstanding any copyright notice, U.S. Government
rights in this work are defined by DFARS 252.227-7013 or DFARS 252.227-7014 as detailed
above. Use of this work other than as specifically authorized by the U.S. Government
may violate any copyrights that exist in this work.}
}

\maketitle
\thispagestyle{plain}
\pagestyle{plain}

\begin{abstract}
    Advances in quantum hardware have begun the noisy intermediate-scale quantum (NISQ) computing era. A pressing question is: what architectures are best suited to take advantage of this new regime of quantum machines? We study various superconducting architectures including Google's Sycamore, IBM's Heavy-Hex, Rigetti's Aspen and Ankaa in addition to a proposed architecture we call bus next-nearest neighbor (busNNN). We evaluate these architectures using benchmarks based on the quantum approximate optimization algorithm (QAOA) which can solve certain quadratic unconstrained binary optimization (QUBO) problems. We also study compilation tools that target these architectures, which use either general heuristic or deterministic methods to map circuits onto a target topology defined by an architecture.
\end{abstract}

\begin{IEEEkeywords}
quantum architectures, qubit connectivity, compilation, qubit routing, circuit scheduling, superconducting qubit, 2-local Hamiltonian, QAOA, NISQ
\end{IEEEkeywords}

\section{Introduction}
\label{prelim}

Many commercial and government entities are placing significant emphasis on the research and development of quantum computers. Machines available today are deemed  Noisy Intermediate Scale Quantum (NISQ) as a reference to the fact that they are both noisy and modest in scale. Much research pertaining to benchmarking and demonstrating quantum devices today is limited to small-scale architectures with a few qubits. This research is critical to reducing noise and producing higher fidelity qubits \cite{flux}, but it does not address issues related to the design of future architectures. Only recently have companies begun to run experiments with dozens or even a hundred qubits \cite{googleSupremacy} \cite{IBMUtility}. We aim to address the problem of which architectures are best suited to take full advantage of precisely this intermediate scale of dozens of qubits and beyond. This paper restricts its focus to superconducting devices implementing circuits without error correction. There are a multitude of considerations that must be made when comparing NISQ superconducting architectures. In order to fully understand the design space we introduce some preliminary definitions and nomenclature typically associated with superconducting quantum devices. 

Superconducting devices have a predefined set of quantum gates, called its gateset, which can be executed natively within a set amount of time (referred to as the gate's runtime).  Quantum circuits are commonly written as an ordered set of instructions (gates) that allow arbitrary single-qubit gates $g_1\in SU(2)$ as well as entangling gates between arbitrary pairs of qubits $g_2\in SU(4)/SU(2)\otimes SU(2)\simeq SO(4)$. Real devices cannot enact arbitrary entanglement in this way. Thus, we define the set of pairs of qubits on which entangling interactions are executable on some quantum device as its coupling map (sometimes called its topology). Together, a quantum computer's topology and gateset define its \textit{architecture}. Running a quantum circuit on a specific superconducting architecture requires translating the base circuit to a different quantum circuit (that represents the same unitary) so as to adhere to the restrictions of a specific device.

The process of translating a quantum circuit to run on a specific architecture is known as compiling. Finding an optimal compilation solution is NP-hard \cite{depth}, and thus heuristic techniques are typically employed. Note that we do not consider the lower-level problem of translating quantum gates into sequences of microwave pulses (together with compiling called transpiling). Various heuristics and some limited exact methods exist to tackle the compilation problem for general quantum circuits. For this reason, resource analysis of algorithms applied to various device architectures is important when developing new quantum algorithms or considering the implementation of existing algorithms. To this end, two prominent metrics for comparing quantum algorithms compiled to separate architectures are entangling (typically two-qubit) gate count and circuit depth. Two-qubit gate count is an important metric because these gates are typically the lowest fidelity. The circuit depth $m$ is the number of sequential layers of gates in the total circuit. Single-qubit gates are often neglected in depth calculations as they can be executed in parallel across separate qubits and take less time to execute than entangling gates. The depth of a given circuit can be found as the longest path in the directed acyclic dependency graph it generates. Reducing the number of two-qubits gates and the depth of the circuit both have a direct impact on the fidelity and execution time of the circuit. 

It is worth noting that while depth can be a useful metric, a better measure of the performance of a circuit is
\textit{scheduled time}. The scheduled time of a quantum circuit, on a given quantum computer, is the summation a circuit's critical path weighted with each gate's runtime. A circuit's critical path is the longest ordered set of gates that cannot be run in parallel on a specific quantum computer. For most architectures this corresponds directly to a circuit's depth, where qubits represent the only dependency in the machine. However, this is not always the case; machines could contain  components (such as a bus described in Section \hyperref[arch]{II}) that impose additional structural dependencies on the execution of gates. Differences in runtime and gateset can further distance a circuit's scheduled time from its depth.

Clearly, there are numerous trade-offs and a large design space to consider in analyzing near-term quantum architectures. In order to accurately compare NISQ devices one must take coupling maps, gatesets, compiling tools, and quantum algorithms into account. This study has two main contributions to present. The first is an evaluation of current and proposed superconducting qubit architectures. The second is a comparison of a representative set of compilation tools and techniques. In Figure \hyperref[flowchart]{(1)} we illustrate the compilation process used in our work. We start with a problem instance and a definition of the device architecture. The problem is then compiled to the architecture, which gives a circuit that can be analyzed to determine the scheduled time and two-qubit gate count. All circuits are treated with the same custom optimization pass as to control for differences in commercial implementations.

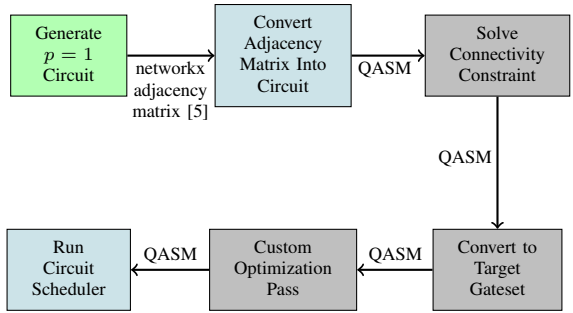
\begin{figure}
\label{flowchart}
    \centering
    \begin{tikzpicture}[
  rect/.style={draw, rectangle},
  or/.style={fill=orange!30},
  bl/.style={fill=blue!10!teal!20},
  gr/.style={fill=green!30},
  gy/.style={fill=black!25!white},
  x=1.5cm,
  y=1.5cm,
  scale=0.95
  ]
  
  \node[rect,gr] (0) at (0,4) {\scriptsize \begin{tabular}{c}
                                Generate \\
                                $p=1$ \\
                                Circuit \\
                                \end{tabular}};
  \node[rect,bl] (1) at (2,4) {\scriptsize \begin{tabular}{c}
                                Convert \\
                                Adjacency \\
                                Matrix Into \\
                                Circuit \\
                                \end{tabular}};
  \node[rect,gy] (2) at (4,4) {\scriptsize \begin{tabular}{c}
                                Solve \\
                                Connectivity \\
                                Constraint \\
                                \end{tabular}};
  \node[rect,gy] (3) at (4,2) {\scriptsize \begin{tabular}{c}
                                Convert to \\
                                Target \\
                                Gateset \\
                                \end{tabular}};
  
  \node[rect,gy] (4) at (2,2) {\scriptsize \begin{tabular}{c}
                                Custom  \\
                                Optimization \\
                                Pass  \\
                                \end{tabular}};
  \node[rect,bl] (5) at (0,2) {\scriptsize \begin{tabular}{c}
                                Run \\
                                Circuit \\
                                Scheduler \\
                                \end{tabular}};
  
  \draw[->, thick] (0) -- (1);
  \draw[->, thick] (1) -- (2);
  \draw[->, thick] (2) -- (3);
  \draw[->, thick] (3) -- (4);
  \draw[->, thick] (4) -- (5);
  
  \node[anchor=north] at (0.95,4.05) {\scriptsize \begin{tabular}{c}
                                networkx \\ 
                                adjacency \\
                                matrix \cite{networkx} \\
                                \end{tabular}};
                                
  \node[anchor=north] at (2.95,4.03) {\scriptsize QASM};
                                
  \node[anchor=north] at (3.7,3.2) {\scriptsize QASM};
                                
  \node[anchor=north] at (3.05,2.3) {\scriptsize QASM};
                                
  \node[anchor=north] at (0.95,2.3) {\scriptsize QASM};
  
\end{tikzpicture}
    \caption{Flowchart showing how a problem instance and architecture are used to produce a compiled circuit which is then converted into a benchmarkable metric. From the final scheduled circuit we extract our chosen metrics.}
\end{figure}

The structure of the paper is as follows; we first describe various existing coupling maps in section \hyperref[arch]{(II)}. In Section \hyperref[qaoa]{(III)} we detail the specific adjacency matrices that specify the Quantum Approximate Optimization Algorithm (QAOA) benchmark circuits. In Section \hyperref[compile]{(IV)}, we give a number of compilation tools that we use and compare. We present the results of our study in Section \hyperref[dsp]{(V)} and finally conclude and discuss future research in Section \hyperref[conclusion]{(VI)}.
\section{Superconducting Qubit Topologies}
\label{arch}

\subsection{Planar}

There are a number of similar, but distinct, superconducting quantum device architectures in development. The three main commercial architectures that we will focus on are Google's Sycamore, Rigetti's Aspen, and IBM's heavy-hex. At the time of writing, the largest publicly announced Google chip is Bristlecone which contains $72$ qubits in a grid-like configuration. Rigetti's $3$rd generation chip uses the Aspen architecture consisting of four octagons with qubits at each vertex positioned in a $4\times2$ grid with two connections between adjacent octagons for a total of $80$ qubits. Recently, Rigetti announced that subsequent generations will no longer utilize this device architecture and instead employ a $2$D grid and be called Ankaa \hyperref[aspen]{\cite{aspen}}. Finally, IBM devices are defined on a ``heavy-hex" architecture \hyperref[heavyhex]{\cite{heavyhex}}. The largest IBM device deployed on the cloud today is $133$ qubits with announcements promising many more. Heavy-hex is made up of tiled hexagons with qubits occupying each vertex and side. For completeness, we also include two generic coupling maps in our analysis: a one-dimensional line and a two-dimensional grid (similar to Rigetti's announced system). A depiction of the three main planar architectures is shown in Figure~\hyperref[commercial]{(2)}.

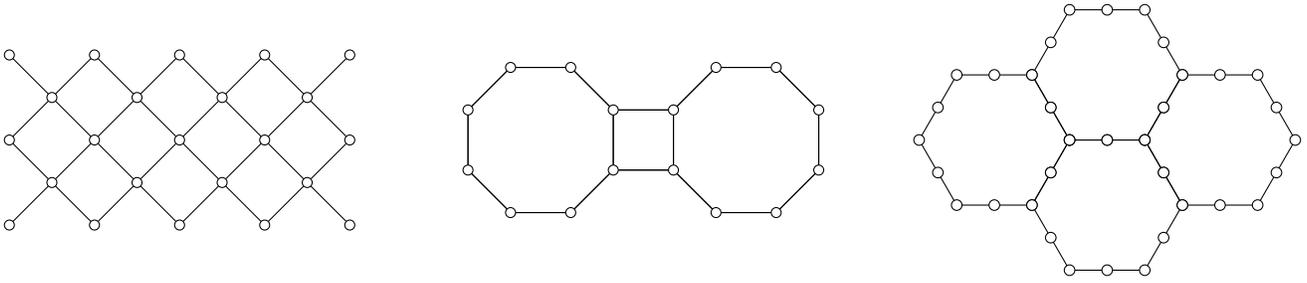
\begin{figure*}[!htb]
\label{commercial}
    \minipage{0.32\textwidth}
        \centering
        \begin{tikzpicture}[auto, node distance = 2cm, main node/.style = {circle, draw, scale = 0.4pt}]
	\node[main node] (0) {};
	\node[main node] (1) [below right of=0] {};
	\node[main node] (2) [below left of=1] {};
	\node[main node] (3) [below right of=2] {};
	\node[main node] (4) [below left of=3] {};
	\node[main node] (5) [above right of=1] {};
	\node[main node] (6) [below right of=1] {};
	\node[main node] (7) [below right of=3] {};
	\node[main node] (8) [below right of=5] {};
	\node[main node] (9) [below right of=6] {};
	\node[main node] (10) [above right of=8] {};
	\node[main node] (11) [below right of=8] {};
	\node[main node] (12) [below right of=9] {};
	\node[main node] (13) [below right of=10] {};
	\node[main node] (14) [below right of=11] {};
	\node[main node] (15) [above right of=13] {};
	\node[main node] (16) [below right of=13] {};
	\node[main node] (17) [below right of=14] {};
	\node[main node] (18) [below right of=15] {};
	\node[main node] (19) [below right of=16] {};
	\node[main node] (20) [above right of=18] {};
	\node[main node] (21) [below right of=18] {};
	\node[main node] (22) [below right of=19] {};
	
	\begin{scope}[-]
		\path
		   	(0) edge (1)
	    	(1) edge (5)
	    	(1) edge (2)
	    	(7) edge (3)
	    	(3) edge (2)
	   	    (6) edge (3)
	    	(4) edge (3)
	    	(1) edge (6)
	    	(8) edge (6)
	    	(5) edge (8)
    		(7) edge (9)
    		(8) edge (10)
    		(6) edge (9)
    		(11) edge (8)
    		(11) edge (9)
    		(9) edge (12)
    		(11) edge (13)
    		(10) edge (13)
    		(13) edge (15)
    		(12) edge (14)
    		(13) edge (16)
    		(11) edge (14)
    		(16) edge (14)
    		(14) edge (17)
    		(16) edge (18)
    		(15) edge  (18)
    		(18) edge (20) 
    		(17) edge (19)
    		(16) edge (19)
    		(18) edge (21)
    		(21) edge (19)
    		(19) edge (22);
    \end{scope}
\end{tikzpicture}
    \endminipage\hfill
    \minipage{0.32\textwidth}
        \centering
        \begin{tikzpicture}[auto, node distance = 2cm, main node/.style={circle, draw, scale = 0.4pt}]
	\node[main node] (0) {};
	\node[main node] (1) [right of=0] {};
	\node[main node] (2) [below left of=0] {};
	\node[main node] (3) [below right of=1] {};
	\node[main node] (4) [below of=2] {};
	\node[main node] (5) [below of=3] {};
	\node[main node] (6) [below right of=4] {};
	\node[main node] (7) [below left of=5] {};
	\node[main node] (8) [right of=3] {};
	\node[main node] (9) [right of=5] {};
	\node[main node] (10) [above right of=8] {};
	\node[main node] (11) [below right of=9] {};
	\node[main node] (12) [right of=10] {};
	\node[main node] (13) [right of=11] {};
	\node[main node] (14) [below right of=12] {};
	\node[main node] (15) [above right of=13] {};
	
	\path[every node/.style={font=\sffamily\small}]
	(0) edge node [right] {} (1)
	(1) edge node [left] {} (0)
	(0) edge node [right] {} (2)
	(2) edge node [left] {} (0)
	(2) edge node [right] {} (4)
	(4) edge node [left] {} (2)
	(4) edge node [right] {} (6)
	(6) edge node [left] {} (4)
	(1) edge node [right] {} (3)
	(3) edge node [left] {} (1)
	(3) edge node [right] {} (5)
	(5) edge node [left] {} (3)
	(5) edge node [right] {} (7)
	(7) edge node [left] {} (5)
	(6) edge node [right] {} (7)
	(7) edge node [left] {} (6)
	(8) edge node [right] {} (10)
	(10) edge node [left] {} (8)
	(8) edge node [right] {} (9)
	(9) edge node [left] {} (8)
	(9) edge node [right] {} (11)
	(11) edge node [left] {} (9)
	(11) edge node [right] {} (13)
	(13) edge node [left] {} (11)
	(13) edge node [right] {} (15)
	(15) edge node [left] {} (13)
	(14) edge node [right] {} (15)
	(15) edge node [left] {} (14)
	(12) edge node [right] {} (14)
	(14) edge node [left] {} (12)
	(10) edge node [right] {} (12)
	(12) edge node [left] {} (10)
	(3) edge node [right] {} (8)
	(8) edge node [left] {} (3)
	(5) edge node [right] {} (9)
	(9) edge node [left] {} (5); 
\end{tikzpicture}   
    \endminipage\hfill
    \minipage{0.32\textwidth}
        \centering
        \begin{tikzpicture}
  	\begin{scope}[every node/.style={anchor=west, regular polygon, regular polygon sides=6, draw, minimum width=2cm, outer sep=0}, transform shape]
    	\node (A) {};
    	\node (B) at (A.corner 1) {};
		\node (C) at (B.corner 5) {};
    	\node (D) at (A.corner 5) {};
   		\foreach \hex in {A,...,D}
    		{
      			\foreach \corn in {1,...,6}
			{
       			\draw[fill=white] (\hex.corner \corn) circle (2pt); 
			}
			\foreach \side in {1,...,6}
			{
				\draw[fill=white] (\hex.side \side) circle (2pt);
			}
    		}
	\end{scope}
\end{tikzpicture}
    \endminipage\hfill
    \centering
    \caption{Commercially available architectures. From left to right: a $23$ qubit example of Goolge's Sycamore architecture, a $16$ qubit subgraph of a Rigetti Aspen chip, and four tiled heavy-hexagons illustrating IBM's architectures.} 
\end{figure*}

The connectivity of a topology is defined as the average number of edges per node. All commercial architectures are planar graphs with average connectivities ranging from $\approx2.27$ for IBM Eagle to $\approx3.36$ for Google Bristlecone. Intuitively, higher connectivity leads to lower depth circuits. However, the amount of parallelism present in the circuit will have an impact on the performance. As will be discussed in Section \hyperref[compile]{(IV)}, in general, a logarithmic depth overhead $\Omega(\log n)$ is necessarily incurred when utilizing coupling maps with finite degree \hyperref[depth]{\cite{depth}}. However, this overhead is only theoretically achievable. In practice, this lower-bound is difficult to reach.

\subsection{Non-Planar}

In addition to the planar coupling maps defined above, we introduce two architectures based on non-planar coupling maps (see Figure~\hyperref[nonplanar]{(3)} for an illustration of these two architectures). These non-planar graphs gain connectivity but are more complex to fabricate than the planar topologies. First is the layered Sycamore chip, which is defined as two Sycamore chips connected via transverse vertical links.
This increases the average number of edges per qubit from $\approx 3.36$ to $\approx 4.36$.

\begin{figure}
\label{nonplanar}
    \centering
        \minipage{0.48\textwidth}
        \centering
        \begin{tikzpicture}[auto, node distance = 2.5cm, main node/.style = {circle, draw, scale = 0.4pt}]
	\node[main node] (0) {};
	\node[main node] (1) [below right of=0] {};
	\node[main node] (2) [below left of=1] {};
	\node[main node] (3) [below right of=2] {};
	\node[main node] (4) [below left of=3] {};
	\node[main node] (5) [above right of=1] {};
	\node[main node] (6) [below right of=1] {};
	\node[main node] (7) [below right of=3] {};
	\node[main node] (8) [below right of=5] {};
	\node[main node] (9) [below right of=6] {};
	\node[main node] (10) [above right of=8] {};
	\node[main node] (11) [below right of=8] {};
	\node[main node] (12) [below right of=9] {};
	\node[main node] (13) [below right of=10] {};
	\node[main node] (14) [below right of=11] {};
	\node[main node] (15) [above right of=13] {};
	\node[main node] (16) [below right of=13] {};
	\node[main node] (17) [below right of=14] {};
	\node[main node] (18) [below right of=15] {};
	\node[main node] (19) [below right of=16] {};
	\node[main node] (20) [above right of=18] {};
	\node[main node] (21) [below right of=18] {};
	\node[main node] (22) [below right of=19] {};
	\node[main node, draw=none] (46) [above right of=0] {};
	\node[main node, draw=none] (47) [above left of=46] {};
	\node[main node, fill=black!25!white] (23) [below of=47] {};
	\node[main node, fill=black!25!white] (24) [below right of=23] {};
	\node[main node, fill=black!25!white] (25) [below left of=24] {};
	\node[main node, fill=black!25!white] (26) [below right of=25] {};
	\node[main node, fill=black!25!white] (27) [below left of=26] {};
	\node[main node, fill=black!25!white] (28) [above right of=24] {};
	\node[main node, fill=black!25!white] (29) [below right of=24] {};
	\node[main node, fill=black!25!white] (30) [below right of=26] {};
	\node[main node, fill=black!25!white] (31) [below right of=28] {};
	\node[main node, fill=black!25!white] (32) [below right of=29] {};
	\node[main node, fill=black!25!white] (33) [above right of=31] {};
	\node[main node, fill=black!25!white] (34) [below right of=31] {};
	\node[main node, fill=black!25!white] (35) [below right of=32] {};
	\node[main node, fill=black!25!white] (36) [below right of=33] {};
	\node[main node, fill=black!25!white] (37) [below right of=34] {};
	\node[main node, fill=black!25!white] (38) [above right of=36] {};
	\node[main node, fill=black!25!white] (39) [below right of=36] {};
	\node[main node, fill=black!25!white] (40) [below right of=37] {};
	\node[main node, fill=black!25!white] (41) [below right of=38] {};
	\node[main node, fill=black!25!white] (42) [below right of=39] {};
	\node[main node, fill=black!25!white] (43) [above right of=41] {};
	\node[main node, fill=black!25!white] (44) [below right of=41] {};
	\node[main node, fill=black!25!white] (45) [below right of=42] {};
	\node[main node, draw=none] (48) [below of=27] {};
	\begin{scope}[-]
        \path[every node/.style={font=\sffamily\small}, white]
            (0) edge (1)
            (1) edge (5)
            (1) edge (2)
            (7) edge (3)
            (3) edge (2)
            (6) edge (3)
            (4) edge (3)
   		    (1) edge (6)
            (8) edge (6)
            (5) edge (8)
            (7) edge (9)
            (8) edge (10)
            (6) edge (9)
            (11) edge (8)
            (11) edge (9)
            (9) edge (12)
            (11) edge (13)
            (10) edge (13)
            (13) edge (15)
            (12) edge (14)
            (13) edge (16)
            (11) edge (14)
            (16) edge (14)
            (14) edge (17)
            (16) edge (18)
            (15) edge  (18)
            (18) edge (20) 
            (17) edge (19)
            (16) edge (19)
            (18) edge (21)
            (21) edge (19)
            (19) edge (22);
        \path[every node/.style={font=\sffamily\small}]
    	    (0) edge (1)
    	    (1) edge (5)
    	    (1) edge (2)
    	    (7) edge (3)
    	    (3) edge (2)
    	    (6) edge (3)
    	    (4) edge (3)
   		    (1) edge (6)
    	    (8) edge (6)
    	    (5) edge (8)
    	    (7) edge (9)
    	    (8) edge (10)
    	    (6) edge (9)
    	    (11) edge (8)
    	    (11) edge (9)
    	    (9) edge (12)
   	        (11) edge (13)
    	    (10) edge (13)
    	    (13) edge (15)
    	    (12) edge (14)
    	    (13) edge (16)
    	    (11) edge (14)
    	    (16) edge (14)
    	    (14) edge (17)
    	    (16) edge (18)
    	    (15) edge  (18)
    	    (18) edge (20) 
    	    (17) edge (19)
    	    (16) edge (19)
    	    (18) edge (21)
    	    (21) edge (19)
    	    (19) edge (22);
        \path[every node/.style={font=\sffamily\small}, white]
		    (23) edge (24)
    	    (24) edge (28)
    	    (24) edge (25)
    	    (30) edge (26)
    	    (26) edge (25)
    	    (29) edge (26)
    	    (27) edge (26)
    	    (24) edge (29)
    	    (31) edge (29)
    	    (28) edge (31)
    	    (30) edge (32)
   	        (31) edge (33)
    	    (29) edge (32)
    	    (34) edge (31)
    	    (34) edge (32)
    	    (32) edge (35)
    	    (34) edge (36)
    	    (33) edge (36)
    	    (36) edge (38)
    	    (35) edge (37)
            (36) edge (39)
    	    (34) edge (37)
    	    (39) edge (37)
    	    (37) edge (40)
    	    (39) edge (41)
    	    (38) edge (41)
    	    (41) edge (43) 
    	    (40) edge (42)
    	    (39) edge (42)
    	    (41) edge (44)
    	    (44) edge (42)
    	    (42) edge (45);
        \path[every node/.style={font=\sffamily\small}, black]
		    (23) edge (24)
    	    (24) edge (28)
    	    (24) edge (25)
    	    (30) edge (26)
    	    (26) edge (25)
    	    (29) edge (26)
    	    (27) edge (26)
    	    (24) edge (29)
    	    (31) edge (29)
    	    (28) edge (31)
    	    (30) edge (32)
   		    (31) edge (33)
    	    (29) edge (32)
    	    (34) edge (31)
    	    (34) edge (32)
    	    (32) edge (35)
    	    (34) edge (36)
    	    (33) edge (36)
    	    (36) edge (38)
    	    (35) edge (37)
   		    (36) edge (39)
    	    (34) edge (37)
    	    (39) edge (37)
    	    (37) edge (40)
    	    (39) edge (41)
    	    (38) edge  (41)
    	    (41) edge (43) 
    	    (40) edge (42)
    	    (39) edge (42)
    	    (41) edge (44)
    	    (44) edge (42)
    	    (42) edge (45);
        \path[every node/.style={font=\sffamily\small}, black!65!white]
		    (0) edge (23)
    	    (4) edge (27)
    	    (20) edge (43)
    	    (22) edge (45)
    	    (2) edge (25)
    	    (21) edge (44)
    	    (10) edge (33)
    	    (11) edge (34)
    	    (12) edge (35)
    	    (5) edge (28)
    	    (6) edge (29)
    	    (7) edge (30)
   	        (15) edge (38)
    	    (16) edge (39)
    	    (17) edge (40)
    	    (1) edge (24)
    	    (3) edge (26)
    	    (18) edge (41)
    	    (19) edge (42)
    	    (8) edge (31)
    	    (9) edge (32)
    	    (13) edge (36)
    	    (14) edge (37);
    \end{scope}
\end{tikzpicture}
    \endminipage\hfill
        \minipage{0.48\textwidth}
        \centering
        \begin{tikzpicture}[auto, node distance = 1.75cm, main node/.style = {circle, draw, scale = 0.4pt}, box/.style = {rectangle, draw, fill=black!15!white, minimum width=0.4pt, minimum height=24mm}]

	\node[main node, fill=black!30!white] (0) {};
	\node[main node, fill=black!30!white] (1) [below of=0] {};
	\node[main node, fill=black!30!white] (2) [below of=1] {};
	\node[main node, fill=black!30!white] (3) [below of=2] {};
	
	\node[main node, draw=none] (4) [above right of=0] {};
	
	\node[main node, draw=none] (5) [below right of=4] {};
	\node[main node, draw=none] (6) [above of=5] {};
	\node[main node, draw=none] (7) [below of=5] {};
	\node[main node, draw=none] (8) [below of=6] {};
	\node[main node, draw=none] (9) [below of=7] {};

	\node[box] (10) [below of=6] {};
	
	\node[main node, draw=none] (11) [above right of=5] {};
	
	\node[main node, fill=black!30!white] (22) [below right of=11] {};
	\node[main node, fill=black!30!white] (23) [below of=22] {};
	\node[main node, fill=black!30!white] (24) [below of=23] {};
	\node[main node, fill=black!30!white] (25) [below of=24] {};
	
    \node[main node, draw=none] (26) [above right of=22] {};	
	
	\node[main node, fill=black!30!white] (12) [below right of=26] {};
	\node[main node, fill=black!30!white] (13) [below of=12] {};
	\node[main node, fill=black!30!white] (14) [below of=13] {};
	\node[main node, fill=black!30!white] (15) [below of=14] {};
	
	\node[main node, draw=none] (16) [above right of=12] {};
	
	\node[main node, draw=none] (17) [below right of=16] {};
	\node[main node, draw=none] (18) [above of=17] {};
	\node[main node, draw=none] (19) [below of=17] {};
	\node[main node, draw=none] (20) [below of=19] {};
	\node[main node, draw=none] (21) [below of=20] {};

	\node[box] (27) [below of=18] {};
	
	\node[main node, draw=none] (28) [above right of=17] {};
	
	\node[main node, fill=black!30!white] (29) [below right of=28] {};
	\node[main node, fill=black!30!white] (30) [below of=29] {};
	\node[main node, fill=black!30!white] (31) [below of=30] {};
	\node[main node, fill=black!30!white] (32) [below of=31] {};
	
    \node[main node, draw=none] (33) [above right of=29] {};	
	
	\node[main node, fill=black!30!white] (34) [below right of=33] {};
	\node[main node, fill=black!30!white] (35) [below of=34] {};
	\node[main node, fill=black!30!white] (36) [below of=35] {};
	\node[main node, fill=black!30!white] (37) [below of=36] {};
	
	\node[main node, draw=none] (38) [above right of=34] {};
	
	\node[main node, draw=none] (39) [below right of=38] {};
	\node[main node, draw=none] (40) [above of=39] {};
	\node[main node, draw=none] (41) [below of=39] {};
	\node[main node, draw=none] (42) [below of=41] {};
	\node[main node, draw=none] (43) [below of=42] {};

	\node[box] (44) [below of=40] {};
	
	\node[main node, draw=none] (45) [above right of=39] {};
	\node[main node, fill=black!30!white] (46) [below right of=45] {};
	\node[main node, fill=black!30!white] (47) [below of=46] {};
	\node[main node, fill=black!30!white] (48) [below of=47] {};
	\node[main node, fill=black!30!white] (49) [below of=48] {};

	\begin{scope}[-]
		\path
	 	   	(22) edge (12)
	     	(23) edge (13)
	     	(24) edge (14)
	     	(25) edge (15)
	 	   	(12) edge (22)
	     	(13) edge (23)
	     	(14) edge (24)
	     	(15) edge (25)
	 	   	(29) edge (34)
	     	(30) edge (35)
	     	(31) edge (36)
	     	(32) edge (37)
	 	   	(34) edge (29)
	     	(35) edge (30)
	     	(36) edge (31)
	     	(37) edge (32)
	     	(0) edge (10)
	     	(1) edge (10)
	     	(2) edge (10)
	     	(3) edge (10)
	     	(10) edge (0)
	     	(10) edge (1)
	     	(10) edge (2)
	     	(10) edge (3)
	        (22) edge (10)
	        (23) edge (10)
	        (24) edge (10)
	        (25) edge (10)
	        (10) edge (22)
	        (10) edge (23)
	        (10) edge (24)
	        (10) edge (25)
	        (12) edge (27)
	        (13) edge (27)
	        (14) edge (27)
	        (15) edge (27)
	        (27) edge (12)
	        (27) edge (13)
	        (27) edge (14)
	        (27) edge (15)
	        (29) edge (27)
	        (30) edge (27)
	        (31) edge (27)
	        (32) edge (27)
	        (27) edge (29)
	        (27) edge (30)
	        (27) edge (31)
	        (27) edge (32)
	        (34) edge (44)
	        (35) edge (44)
	        (36) edge (44)
	        (37) edge (44)
	        (44) edge (34)
	        (44) edge (35)
	        (44) edge (36)
	        (44) edge (37)
	        (34) edge (44)
	        (35) edge (44)
	        (36) edge (44)
	        (37) edge (44)
	        (46) edge (44)
	        (47) edge (44)
	        (48) edge (44)
	        (49) edge (44)
	        (44) edge (46)
	        (44) edge (47)
	        (44) edge (48)
	        (44) edge (49);
    \end{scope}
\end{tikzpicture}
    \endminipage\hfill
    \vspace{10pt}
    \caption{A layered configuration of $23$ qubit sycamore chips (top) and a graphical representation of the busNNN topology (bottom). In the busNNN architecture, each rectangle represents a bus that can be dynamically reconfigured to connect any pair of qubits connected to it. The number of qubits connected to
    a bus is a parameter of the bus definition, and we assume that a bus can only execute a single two-qubit gate at a time (although this could conceivably be altered).}
\end{figure}

Our final coupling map utilizes a (qubit) \textit{bus} \hyperref[bus]{\cite{bus}} \hyperref[busComponent]{\cite{busComponent}}. This bus allows arbitrary two-qubit gates between any pair of qubits connected to it. However, each entangling gate must be sequentialized. The bus increases connectivity but comes with the trade off of added hardware complexity and reduced parallelism. There is a natural optimization between the number of qubits connected via a bus $|B|$ and the total number of buses $B$. Larger buses reduce the number of swaps required but also reduce the parallelism of a circuit. Our base-line design utilizes buses with $|B|=8$ qubits, which are connected via point-to-point connections (an architecture we call bus next-nearest neighbor (busNNN) as illustrated in Figure \hyperref[nonplanar]{(3)}. These architectures are listed in Table \hyperref[tab1]{(I)}.

\begin{table*}
\label{tab1}
    \centering
    \begin{tabular}{ |p{2.25cm}|p{1cm}|p{3.25cm}|p{1.75cm}|p{2cm}|  }
        \hline
        \multicolumn{5}{|c|}{All Considered Coupling Maps} \\
        \hline
        \centering
        Topology & Qubits & Connectivity & Type & Existence\\
        \hline
        Sycamore & $72$ & $\approx3.36$ & Planar & Commercial \\ 
        Aspen & $80$ & $2.5$ & Planar & Commercial \\
        Eagle (heavy-hex) & $127$ & $\approx2.27$ & Planar & Commercial \\
        Line & $n$ & $2(n-1)/n$ & Planar & Hypothetical \\
        $2$D Grid & $N^2$ & $4(N-1)/N$ & Planar & Hypothetical \\
        Layered & $144$ & $\approx4.36$ & Non-Planar & Hypothetical \\
        busNNN & $B$ & $|B|(|B| B-1)/2$ & Non-Planar & Hypothetical \\
        \hline
    \end{tabular}
    \vspace{10pt}
    \caption{Coupling maps for the architectures considered in this study. The size of each device instance is either fixed or varied to accommodate the number of qubits in a circuit.}
\end{table*}
\section{QAOA Problem Instances}
\label{qaoa}

In our study of superconducting architectures, it is important to benchmark relevant quantum algorithms. One such algorithm is the Quantum Approximate Optimization Algorithm (QAOA) which is itself a 2-local Hamiltonian simulation. Its circuit thus has qualitatively similar structure to other important quantum algorithms including Hamiltonian simulation for Heisenberg, XY, Ising, and Fermi-Hubbard models \hyperref[IsingNP]{\cite{IsingNP}} \hyperref[tutorialQUBO]{\cite{tutorialQUBO}} \hyperref[NPQUBO]{\cite{NPQUBO}}. 

\begin{table*}
\label{tab2}
    \centering
    \begin{tabular}{ |p{4.5cm}||p{3cm}|p{5.7cm}| }
        \hline
        \multicolumn{3}{|c|}{All Considered Problem Instances} \\
        \hline
        \centering
        Model Graph & Description & Applications \\
        \hline
        3 regular & degree 3 & maxcut \\
        12 regular & degree 12 & maxcut \\
        Watts-Strogatz (WS) & high clustering & small-world, community structure, etc. \\
        Barab{\'a}si-Albert (BA) & power law growth & social networks, price models, etc. \\
        Sherrington-Kirkpatrick (SK) & fully connected & Ising and spin-glass \\
        Erd{\"o}s R{\'e}nyi (ER) & ``average" graph & unstructured problems \\
        \hline
    \end{tabular}
    \vspace{10pt}
    \caption{Graph types considered in this study. The analysis is performed using $p=1$ QAOA implementations.}
\end{table*}

QAOA produces approximate solutions for combinatorial optimization problems \hyperref[QAOA]{\cite{QAOA}} \hyperref[boundedQAOA]{\cite{boundedQAOA}} \hyperref[QAOAsupremacy]{\cite{QAOAsupremacy}}. These problems are specified by minimizing an objective function \[C(z)=\sum_{ij}A_{ij}z_iz_j\] over $z\in\{\pm1\}^N$ where $A\in\mathcal{M}(\mathbb{R})_{N\times N}$ is the adjacency matrix for the problem instance encoded as an undirected weighted graph \hyperref[combinatorialOptimization]{\cite{combinatorialOptimization}}. The quantum circuits consists of $p\in\mathbb{Z}^+$ layers with classical processing sandwiching each. In the version of QAOA solving maxcut, all layers consist of the parameterized unitary \[U(\beta,\gamma)=\prod\mbox{exp}(i\gamma ZZ)\prod\mbox{exp}(i\beta X)\] where $\gamma,\beta\in\mathbb{R}$ are the angles to be classically optimized and change from layer to layer. In the limit $p\rightarrow\infty$, QAOA converges to an optimal solution. As every layer is identical, we consider $p=1$ QAOA max-cut circuits for simplicity. It is an active area of research to determine the value of $p$ where QAOA yields quantum advantage in terms of speed and/or solution quality. Near-term devices can only run relatively-low depth circuits and thus stand firmly in the low-$p$ regime \hyperref[relaxRound]{\cite{relaxRound}} \hyperref[recursiveQAOA]{\cite{recursiveQAOA}}. Note that, for the subsequent discussion of problem instance candidates, the weights in each graph contribute nothing to our desired metrics related to compilation and can therefore be ignored. Our study is not of the performance of QAOA, but rather the scheduling and compilation of circuits to target coupling maps, for which we choose QAOA to be representative of 2-local Hamiltonian simulation circuits. We explicitly employ QAOA for max-cut circuits because their qualitative structure is similar to 2-local Hamiltonian simulation circuits and they are straightforward to generate.

\begin{figure}[!htb]
\label{png}
    \minipage{0.16\textwidth}
        \centering
        \begin{tikzpicture}[lineDecorate/.style = {-}, nodeDecorate/.style = {shape = circle, inner sep = 1pt, draw}, scale = 1]
\foreach \nodename/\x/\y in {
  20/1.00000000000000/0.000000000000000,
  1/0.951056516295154/0.309016994374947,
  2/0.809016994374947/0.587785252292473,
  3/0.587785252292473/0.809016994374947,
  4/0.309016994374947/0.951056516295154,
  5/0.000000000000000/1.00000000000000,
  6/-0.309016994374947/0.951056516295154,
  7/-0.587785252292473/0.809016994374947,
  8/-0.809016994374947/0.587785252292473,
  9/-0.951056516295154/0.309016994374947,
  10/-1.00000000000000/0.000000000000000,
  11/-0.951056516295154/-0.309016994374947,
  12/-0.809016994374947/-0.587785252292473,
  13/-0.587785252292473/-0.809016994374947,
  14/-0.309016994374947/-0.951056516295154,
  15/0.000000000000000/-1.00000000000000,
  16/0.309016994374947/-0.951056516295154,
  17/0.587785252292473/-0.809016994374947,
  18/0.809016994374947/-0.587785252292473,
  19/0.951056516295154/-0.309016994374947}
{
  \node (\nodename) at (\x,\y) [nodeDecorate] {};
}
\path
\foreach \startnode/\endnode in {
  1/2, 1/4, 1/5, 1/7, 1/11, 1/19, 1/20, 2/9, 2/18, 3/6, 3/8, 3/17, 4/8, 4/9, 4/14, 4/17,
  4/19, 4/20, 5/7, 5/9, 5/12, 5/14, 5/17, 5/18, 5/19, 6/7, 6/10, 6/14, 6/17, 7/12, 7/14,
  8/20, 9/11, 9/16, 10/13, 10/19, 11/14, 11/16, 11/17, 11/18, 11/19, 11/20, 12/13, 13/14,
  13/16, 13/17, 13/18, 13/20, 14/15, 16/18, 16/20, 18/19, 18/20}
{
  (\startnode) edge[lineDecorate] node {} (\endnode)
};
\end{tikzpicture}
    \endminipage\hfill
    \minipage{0.16\textwidth}
        \centering
        \begin{tikzpicture}[lineDecorate/.style = {-}, nodeDecorate/.style = {shape = circle, inner sep = 1pt, draw}, scale = 1, noNode/.style = {shape = circle, inner sep = 1pt, draw = none}]
\foreach \nodename/\x/\y in {
  20/1.00000000000000/0.000000000000000,
  1/0.951056516295154/0.309016994374947,
  2/0.809016994374947/0.587785252292473,
  3/0.587785252292473/0.809016994374947,
  4/0.309016994374947/0.951056516295154,
  5/0.000000000000000/1.00000000000000,
  6/-0.309016994374947/0.951056516295154,
  7/-0.587785252292473/0.809016994374947,
  8/-0.809016994374947/0.587785252292473,
  9/-0.951056516295154/0.309016994374947,
  10/-1.00000000000000/0.000000000000000,
  11/-0.951056516295154/-0.309016994374947,
  12/-0.809016994374947/-0.587785252292473,
  13/-0.587785252292473/-0.809016994374947,
  14/-0.309016994374947/-0.951056516295154,
  15/0.000000000000000/-1.00000000000000,
  16/0.309016994374947/-0.951056516295154,
  17/0.587785252292473/-0.809016994374947,
  18/0.809016994374947/-0.587785252292473,
  19/0.951056516295154/-0.309016994374947}
{
\node (\nodename) at (\x,\y) [nodeDecorate] {};
}

\foreach \nodename/\x/\y in {
  40/0.5/0,
  21/0.47552825814758/0.15450849718747,
  22/0.404/0.29,
  23/0.29/0.404,
  24/0.15/0.48,
  25/0/0.5,
  26/-0.15/0.48,
  27/-0.29/0.404,
  28/-0.404/0.29,
  29/-0.48/0.15,
  30/-1/0,
  31/-0.48/-0.15,
  32/-0.404/-0.29,
  33/-0.29/-0.404,
  34/-0.15/-0.48,
  35/0/-1,
  36/0.15/-0.48,
  37/0.29/-0.404,
  38/0.404/-0.29,
  39/0.48/-0.15}
{
  \node (\nodename) at (\x, \y) [noNode] {};
}

\path

\foreach \startnode/\endnode in {
    20/2, 1/3, 2/4, 3/5, 4/6, 5/7, 6/8, 7/9, 8/10, 9/11, 10/12, 11/13, 12/14, 13/15, 14/16, 15/17, 16/18, 17/19, 18/20, 19/1, 20/6, 4/10, 12/19, 12/16}
    {
    (\startnode) edge[out=90,in=90,relative] node {} (\endnode)
    }

\foreach \startnode/\endnode in {
    20/1, 1/2, 3/4, 4/5, 5/6, 6/7, 7/8, 8/9, 9/10, 10/11, 12/13, 13/14, 15/16, 16/17, 17/18, 19/20}
    {
    (\startnode) edge node {} (\endnode)
    };

\end{tikzpicture}
    \endminipage\hfill
    \minipage{0.16\textwidth}
        \centering
        \begin{tikzpicture}[lineDecorate/.style = {-}, nodeDecorate/.style = {shape = circle, inner sep = 1pt, draw}, scale = 1]
\foreach \nodename/\x/\y in {
  20/0/0,
  1/0.13/0,
  2/0.217/0.283,
  3/-0.17/0.283,
  4/-0.2/-0.27,
  5/0.27/-0.23,
  6/.5/-.5,
  7/0.63/-0.4,
  8/0.7/0.4,
  9/0.93/0.53,
  10/0.87/0.3,
  11/0.7/0.67,
  12/-0.37/-0.47,
  13/-0.3/-0.63,
  14/-0.6/-0.67,
  15/-0.53/0.23,
  16/-0.8/0.47,
  17/-0.6/0.4,
  18/-0.83/0.83,
  19/-0.67/0.97
  }
{
  \node (\nodename) at (\x,\y) [nodeDecorate] {};
}
\path
\foreach \startnode/\endnode in {
  20/1, 20/2, 20/3, 20/4, 20/5, 5/6, 5/7, 2/8, 8/9, 8/10, 8/11, 4/12, 12/13, 12/14, 3/15, 15/16, 15/17, 17/18, 17/19}
{
  (\startnode) edge[lineDecorate] node {} (\endnode)
};
\end{tikzpicture}

    \endminipage\hfill
    \centering
    \caption{On the left is an Erd{\"o}s R{\'e}nyi graph with $D\approx1/2$. The center and right depict a Watts-Strogatz and Barab{\'a}si-Albert graph respectively all with $20$ vertices.} 
\end{figure}
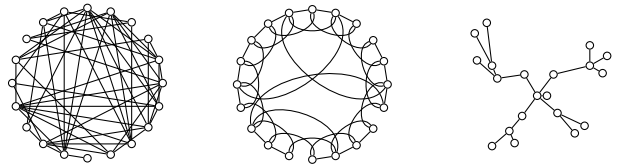

As alluded to above, $2-$local Hamiltonian simulation algorithms are naturally described by an adjacency matrix, with nodes corresponding to variables (qubits) and edges corresponding to terms in the Hamiltonian.
There are a number of choices for graphs to be used as a problem instances including random Erd{\"o}s-R{\'e}nyi (ER) graphs. The ER model selects a graph with $N$ nodes and $M$ edges by choosing from all possible graphs with density $D=2M/N(N+1)$ uniformly at random. We can define multiple unique graphs, leading to unique QAOA circuits, of a given density.

Random graphs, however, may be qualitatively different from graphs corresponding to real problem instances. Therefore we must expand our search to other types of graphs that are perhaps more explicitly linked to real problems in combinatorial optimization. We turn to connected $3$-regular and $12$-regular graphs with unit weight. Random regular graphs are given by placing edges between vertices until every vertex is connected to the specified number of other vertices, in this case $3$ or $12$. We also consider completely-connected Sherrington Kirkpatrick (SK) graphs. This is a specific case of ER graphs with $D=1$. It provides a useful benchmark for the most dense graphs possible. SK graphs can be used to define a type of spin-glass model. 

The final two problem instances we use are Watts-Strogatz (WS) and Barab{\'a}si-Albert (BA) graphs. WS graphs improve the clustering of ER graphs. ER graphs do not generate local clustering or triadic closures (they possess a low clustering coefficient). WS graphs account for clustering while retaining short path lengths. A WS graph is constructed by starting with a ring graph and adding $k=4$ next-nearest neighboring connections. Then one rewires edge $ij$ to $ik$ with probability $p=1/2$. BA graphs address another key shortcoming of the ER model: the scaling of degree. The degree distribution of ER graphs converges to the Poisson distribution, as opposed to the power law which is more realistic. These graphs are built by starting with a star graph with $\lceil n/4+1\rceil$ nodes. Then an additional $\lceil3n/4-1\rceil$ nodes with $\lceil n/4\rceil$ edges are attached to higher degree existing nodes.

We illustrate ER, WS, and BA graphs in Figure \hyperref[png]{(4)}. In Table \hyperref[tab2]{(II)} we summarize the graph types used to create benchmark circuits in our study.
\section{Compiling Tools}
\label{compile}

\subsection{Heuristic}
Similar to Section \hyperref[arch]{(II)} detailing the many existing commercial coupling maps, there are a comparable number of compiling tools. The hardware and algorithm agnostic heuristics we consider are IBM's implementation of SwAp-based BidiREctional search (SABRE) \hyperref[SABRE]{\cite{SABRE}} and the router provided by Tket \hyperref[tket]{\cite{tket}}. In addition to these general-purpose tools we consider an algorithmic-specific tool called 2QAN, which takes advantage of the structure of $2$-local Hamiltonian's for further optimization not possible in the tools based on general heuristics \hyperref[2qan]{\cite{2qan}}. 

The first heuristic compiling tool, SwAp-based BiDiREctional search (SABRE), utilizes a directed acyclic graph representation of the quantum circuit. It is based on two main design principles: prioritizing gates with near-term dependencies, and the novel idea of traversing the reversed circuit to find an initial mapping. (Finding a good initial mapping is a nontrivial component of compilation.) Tket's router uses a time-sliced version of a circuit to construct an initial mapping. In both heuristics a distance measure is used as a cost function. Afterwards, we utilize some of qiskit's optimization passes to clean up any unwanted artifacts of compilation \hyperref[peephole]{\cite{peephole}}.

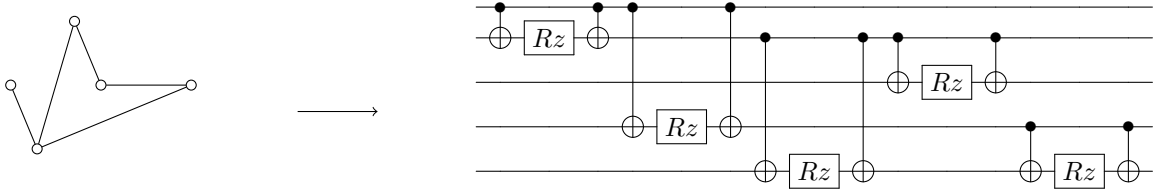
\begin{figure*}[!htb]
\label{circuit}
\noindent
\begin{minipage}{.3\textwidth}
    \begin{tikzpicture}[auto,node distance=3cm,main node/.style={circle,draw,scale=0.4pt}]
    
            \node[main node, color=white] (10) {};
            \node[main node, color=white] (11) [above right of=10] {};
            
			\node[main node] (0) [right of=11]{};
			\node[main node, color=white] (1) [right of=0] {};
			\node[main node] (2) [below left of=0] {};
			\node[main node] (3) [below left of=1] {};
			\node[main node] (4) [right of=3] {};
			\node[main node] (5) [below left of=3] {};
			
			\node[main node, color=white] (6) [above right of=3] {};
			\node[main node, color=white] (7) [below right of=6] {};
			\node[main node, color=white] (8) [above right of=7] {};
			\node[main node, color=white] (9) [below right of=8] {};
			\node[main node, color=white] (12) [below of=6] {};
			\node[main node, color=white] (13) [above right of=12] {};
			\node[main node, color=white] (14) [below right of=13] {};
			\node[main node, color=white] (15) [right of=14] {};
			
		\begin{scope}[-]
			\path
    			(0) edge (5)
    			(0) edge (3)
			    (5) edge (4)
    			(5) edge (2)
			    (3) edge (4);
		\end{scope}
		\begin{scope}[->]
		    \path 
		        (14) edge (15);
		\end{scope}
	\end{tikzpicture}
\end{minipage}
\begin{minipage}{.7\textwidth}
\[\Qcircuit @C=0.5em @R=0.4em {
        && \ctrl{1} & \qw & \ctrl{1} & \ctrl{3} & \qw & \ctrl{3} & \qw & \qw & \qw & \qw & \qw & \qw & \qw & \qw & \qw & \qw \\
        && \targ & \gate{Rz} & \targ & \qw & \qw & \qw & \ctrl{3} & \qw & \ctrl{3} & \ctrl{1} & \qw & \ctrl{1} & \qw & \qw & \qw & \qw \\
        && \qw & \qw & \qw & \qw & \qw & \qw & \qw & \qw & \qw & \targ & \gate{Rz} & \targ & \qw & \qw & \qw & \qw \\
        && \qw & \qw & \qw & \targ & \gate{Rz} & \targ & \qw & \qw & \qw & \qw & \qw & \qw & \ctrl{1} & \qw & \ctrl{1} & \qw \\
        && \qw & \qw & \qw & \qw & \qw & \qw & \targ & \gate{Rz} & \targ & \qw & \qw & \qw & \targ & \gate{Rz} & \targ & \qw \\}
    \]
\end{minipage}
    \caption{On the left is an example problem instance graph. The right depicts a subcircuit of a QAOA instance on $5$ qubits corresponding to that example problem instance. Exp($i\theta ZZ$) operators can be permuted to reduce the depth of the circuit from $5$ time steps to $3$. In general, a 2-local Hamiltonian's operators may not commute but can still be arbitrarily permuted.}
\end{figure*}

SABRE and Tket's heuristic algorithms are powerful. However, they operate at the gate level and therefore cannot take advantage of structural properties of the algorithm. The ability to optimize at the operator-level allows for additional resource reductions. In QAOA it is obvious to see that all entangling operators commute and therefore reordering at the operator level results in significant improvements to the compilation solution. It is also the case, however, that any $2-$local Hamiltonian can be reformulated with effectively commuted entangling operations. This potential optimization is completely invisible to SABRE and Tket. In Figure \hyperref[circuit]{(5)} we show a circuit that can be reduced in depth by 2QAN, but not by the other tools. 

In other words, there are $N_A$ distinct adjacency matrices for the same problem graph $G$ where \[N_A=\frac{n!}{|\mbox{Aut}(G)|}\] and $\mbox{Aut}(G)$ is the group of functions preserving the group structure of $G$  \hyperref[wolframAdjacency]{\cite{wolframAdjacency}}. The quantum circuits generated from these different matrices will be represented via distinct unitaries and are thus inaccessible to the gate-level optimization present in many commercial compilers. 2QAN takes advantage of this added optimization space via a Tabu search \cite{tabu} at the expense of compilation runtime.

\begin{table}
\label{tab3}
    \centering
    \begin{tabular}{ |p{1.5cm}||p{3.25cm}|p{2.75cm}| }
        \hline
        \multicolumn{3}{|c|}{All Considered Compilers} \\
        \hline
        \centering
        Compiler & Description & Provider \\
        \hline
        SABRE & traverses DAG twice & IBM \\
        Tket & time slices circuit & Quantinuum \\
        2QAN & Tabu search & open source \\
        shuffle & low computational overhead & internal implementation \\
        \hline
    \end{tabular}
    \vspace{10pt}
    \caption{All compiling tools considered in this study. These tools provide a representative sample of currently available tools.}
\end{table}

\subsection{Deterministic}

\begin{figure}
\label{swap}    
    \centering
    \begin{tikzpicture}[auto,node distance=2cm,main node/.style={circle,draw,scale=0.8pt}]
			\node[main node] (0) {0};
			\node[main node] (1) [right of=0] {1};
			\node[main node] (2) [right of=1] {2};
			\node[main node] (3) [right of=2] {3};
			\node[main node] (4) [right of=3] {4};
			\node[main node] (5) [right of=4] {5};
			\begin{scope}[-]
				\path
    				(0) edge (1)
    				(1) edge (2)
				(2) edge (3)
    				(3) edge (4)
				(4) edge (5);
			\end{scope}
		\end{tikzpicture}\\
		\begin{tikzpicture}[->,>=stealth',shorten >=1pt,auto,node distance=2cm, main node/.style={circle,draw,scale=0.8pt}]
			\node[main node, color=white] (0) {};
			\node[main node, color=white] (1) [right of=0] {};
			\node[main node, color=white] (2) [right of=1] {};
			\node[main node, color=white] (3) [right of=2] {};
			\node[main node, color=white] (4) [right of=3] {};
			\node[main node, color=white] (5) [right of=4] {};
			\path[every node/.style={font=\sffamily\small}]
			(0) edge node [right] {} (1)
			(1) edge node [right] {} (0)
			(2) edge node [right] {} (3)
			(3) edge node [right] {} (2)
			(4) edge node [right] {} (5)
			(5) edge node [right] {} (4);  
		\end{tikzpicture}\\
		\begin{tikzpicture}[auto,node distance=2cm,main node/.style={circle,draw,scale=0.8pt}]
			\node[main node] (0) {1};
			\node[main node] (1) [right of=0] {0};
			\node[main node] (2) [right of=1] {3};
			\node[main node] (3) [right of=2] {2};
			\node[main node] (4) [right of=3] {5};
			\node[main node] (5) [right of=4] {4};
			\begin{scope}[-]
				\path
    				(0) edge (1)
    				(1) edge (2)
				(2) edge (3)
    				(3) edge (4)
				(4) edge (5);
			\end{scope}
		\end{tikzpicture}\\
		\begin{tikzpicture}[->,>=stealth',shorten >=1pt,auto,node distance=2cm, main node/.style={circle,draw,scale=0.8pt}]
			\node[main node, color=white] (0) {};
			\node[main node, color=white] (1) [right of=0] {};
			\node[main node, color=white] (2) [right of=1] {};
			\node[main node, color=white] (3) [right of=2] {};
			\node[main node, color=white] (4) [right of=3] {};
			\node[main node, color=white] (5) [right of=4] {};
			\path[every node/.style={font=\sffamily\small}]
			(1) edge node [right] {} (2)
			(2) edge node [right] {} (1)
			(3) edge node [right] {} (4)
			(4) edge node [right] {} (3);
		\end{tikzpicture}\\
		\begin{tikzpicture}[auto,node distance=2cm,main node/.style={circle,draw,scale=0.8pt}]
			\node[main node] (0) {1};
			\node[main node] (1) [right of=0] {3};
			\node[main node] (2) [right of=1] {0};
			\node[main node] (3) [right of=2] {5};
			\node[main node] (4) [right of=3] {2};
			\node[main node] (5) [right of=4] {4};
			\begin{scope}[-]
				\path
    				(0) edge (1)
    				(1) edge (2)
				(2) edge (3)
    				(3) edge (4)
				(4) edge (5);
			\end{scope}
		\end{tikzpicture}\\
		\begin{tikzpicture}[->,>=stealth',shorten >=1pt,auto,node distance=2cm, main node/.style={circle,draw,scale=0.8pt}]
			\node[main node, color=white] (0) {};
			\node[main node, color=white] (1) [right of=0] {};
			\node[main node, color=white] (2) [right of=1] {};
			\node[main node, color=white] (3) [right of=2] {};
			\node[main node, color=white] (4) [right of=3] {};
			\node[main node, color=white] (5) [right of=4] {};
			\path[every node/.style={font=\sffamily\small}]
			(0) edge node [right] {} (1)
			(1) edge node [right] {} (0)
			(2) edge node [right] {} (3)
			(3) edge node [right] {} (2)
			(4) edge node [right] {} (5)
			(5) edge node [right] {} (4);
		\end{tikzpicture}\\
		\begin{tikzpicture}[auto,node distance=2cm,main node/.style={circle,draw,scale=0.8pt}]
			\node[main node] (0) {3};
			\node[main node] (1) [right of=0] {1};
			\node[main node] (2) [right of=1] {5};
			\node[main node] (3) [right of=2] {0};
			\node[main node] (4) [right of=3] {4};
			\node[main node] (5) [right of=4] {2};
			\begin{scope}[-]
				\path
    				(0) edge (1)
    				(1) edge (2)
				(2) edge (3)
    				(3) edge (4)
				(4) edge (5);
			\end{scope}
		\end{tikzpicture}\\
		\begin{tikzpicture}[->,>=stealth',shorten >=1pt,auto,node distance=2cm, main node/.style={circle,draw,scale=0.8pt}]
			\node[main node, color=white] (0) {};
			\node[main node, color=white] (1) [right of=0] {};
			\node[main node, color=white] (2) [right of=1] {};
			\node[main node, color=white] (3) [right of=2] {};
			\node[main node, color=white] (4) [right of=3] {};
			\node[main node, color=white] (5) [right of=4] {};
			\path[every node/.style={font=\sffamily\small}]
			(1) edge node [right] {} (2)
			(2) edge node [right] {} (1)
			(3) edge node [right] {} (4)
			(4) edge node [right] {} (3);
		\end{tikzpicture}\\
		\begin{tikzpicture}[auto,node distance=2cm,main node/.style={circle,draw,scale=0.8pt}]
			\node[main node] (0) {3};
			\node[main node] (1) [right of=0] {5};
			\node[main node] (2) [right of=1] {1};
			\node[main node] (3) [right of=2] {4};
			\node[main node] (4) [right of=3] {0};
			\node[main node] (5) [right of=4] {2};
			\begin{scope}[-]
				\path
    				(0) edge (1)
    				(1) edge (2)
				(2) edge (3)
    				(3) edge (4)
				(4) edge (5);
			\end{scope}
		\end{tikzpicture}\\
    \centering
    \caption{A fully shuffled line topology with $n=6$ component qubits. Each row depicts a swap reconfiguration step, with time flowing from top to bottom.}
\end{figure}
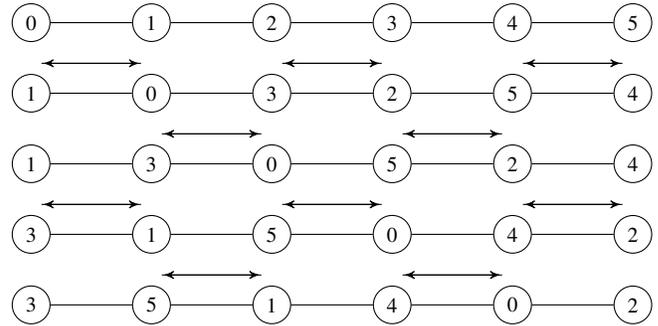

One crucial drawback, with these heuristic algorithms, is that their runtime scales poorly at large problem sizes. Deterministic routing algorithms, on the other hand, can compile very large circuits with predictable, low constant factor scaling. One such algorithm is the full-shuffle which was independently developed by Jin et. al. (called Multi-dimensional All-to-All Permutation and Swaps MAAPS) \hyperref[maaps]{\cite{maaps}} and IBM (implemented in commutative SABRE) \hyperref[commtSABRE]{\cite{commtSABRE}}. These techniques provide scalability to larger system sizes when one has access to patterned coupling maps. The line and $2$D grid strategies have been worked out in the papers above. This allows us to formalize the runtime of shuffle networks, which we restate for completeness in the following theorems.

\begin{theorem}
For the line graph of size $n$, the swap strategy that alternates between two swap layers, one on all odd numbered edges, and one on all even numbered edges reaches full connectivity in $n-2$ layers and is optimal.
\end{theorem}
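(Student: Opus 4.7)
The proof divides into two parts: the achievability claim (the alternating odd-even strategy reaches full connectivity in $n-2$ layers) and the optimality claim (no swap strategy on the line graph beats $n-2$).

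For achievability, I would track the explicit trajectory of each qubit under the odd-even swap protocol. The qubit starting at position $i$ follows a predictable zigzag: it drifts at unit speed toward one boundary, bounces off, and drifts back, so its position at each layer is a simple function of $i$ and the layer index. By direct case analysis (or induction on $n$) I would verify that every pair $(q_i, q_j)$ occupies adjacent positions at some layer $t_{ij} \le n-2$. Small-$n$ computation reveals that the ``latest'' pair is typically $(q_0, q_2)$ (together with symmetric analogs near the other boundary): $q_0$ steadily drifts toward the far boundary, while $q_2$ must travel to the opposite boundary, bounce, and return to meet $q_0$, producing their first adjacency precisely at layer $n-2$.

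For optimality, the argument is more delicate. A naive counting bound observes that a single swap layer can introduce at most roughly $\lfloor 2(n-1)/3\rfloor$ new adjacent pairs --- set by the matching-plus-neighbors structure of swappable edges on the line --- while $\binom{n}{2}-(n-1)$ non-initial pairs must be covered, giving an asymptotic match to $n-2$ that is not always exact. The refinement exploits a displacement (light-cone) argument: after $L$ layers, any qubit has moved at most $L$ positions, so $q_0$ can have met only qubits whose original index is at most $2L+1$, yielding $L \ge (n-2)/2$. Combining this displacement bound with the structural fact that ``maximally productive'' layers cannot be chained (a layer introducing the maximum number of new pairs leaves the system in a state with severely constrained productivity for the next layer) tightens the bound to $n-2$.

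The main obstacle is the lower bound. The upper-bound trajectory analysis is essentially mechanical, but the lower bound requires combining the displacement constraint with a productivity-accounting argument, formalized either via a potential function or an adversary strategy. The upshot is that the adversarial worst case is essentially the odd-even alternation itself: any deviation only reorders \emph{which} pairs are met \emph{when}, without reducing the total layer count below $n-2$.
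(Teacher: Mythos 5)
You should first note that the paper does not actually prove this theorem: it states that ``the proof is given in the references above'' (the MAAPS and commutative-SABRE papers) and only illustrates the upper bound in Figure 6, so there is no in-paper argument to compare against. Your upper-bound plan --- tracking the zigzag trajectory of each qubit under odd--even transposition and checking that every pair becomes adjacent by layer $n-2$ --- is the standard route and is workable, although as written it is a plan (``I would verify by case analysis or induction'') rather than a completed argument.

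The genuine gap is in the optimality claim, and your sketch does not close it. Your counting bound is mis-evaluated: if a layer can create at most $\lfloor 2(n-1)/3\rfloor$ new adjacent pairs and $\binom{n}{2}-(n-1)=\tfrac{1}{2}(n-1)(n-2)$ pairs remain to be created, the quotient is $\tfrac{3}{4}(n-2)$, not ``an asymptotic match to $n-2$.'' The displacement bound is weaker still, giving only $L\ge\tfrac{1}{2}(n-2)$. Everything therefore rests on the unproved assertion that ``maximally productive layers cannot be chained,'' and that assertion is exactly the hard part: for instance on $n=7$ the single swap layer $\{(1,2),(4,5)\}$ really does create $4=\lfloor 2(n-1)/3\rfloor$ new pairs, so no per-layer bound can be sharpened, and one genuinely needs an amortized, multi-layer potential or adversary argument quantifying how much future productivity a greedy layer destroys. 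You have not formulated such an argument, so the lower bound of $n-2$ is not established. I would either carry out that amortization explicitly or, as the paper does, cite the references in which optimality is proved.
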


The proof is given in the references above and intuition is illustrated in Figure \hyperref[swap]{(6)}. Note that a provably optimal full shuffle strategy for a $2$D grid has not yet been discovered. Still, a swap strategy has been developed; it's proof is also contained in the above references. The two-dimensional strategy alternates between applying $N-1$ steps of the line swap strategy to each row individually and swapping all rows by applying exactly two steps of the line swap strategy to each column. The code given in Figure \hyperref[code]{(7)} shows how to generate those layers and store them in the dictionary swaps with keys corresponding to each layer index. We develop an analogous full shuffle strategy for our proposed busNNN architecture. The design of this architecture is motivated by the optimality of its swap strategy, however it is still prudent to test its performance with other compilation algorithms.

\begin{figure}
\label{code}
\begin{center}
\begin{python}
for row in range(N):
  for col in range(row*N,row*N+N):
    if ((((col
      if (row
        swaps[0].append((col,col+1))
      if (row
        swaps[1].append((col,col+1))
    if ((((col
      if (row
        swaps[1].append((col,col+1))
      if (row
        swaps[0].append((col,col+1))
for col in range(N):
  for row in range(N):
    if ((((row+col*N)
      if ((col+row*N+N)<=(N**2-1)):
        swaps[2].append((col+row*N,col+row*N+N))
    if ((((row+col*N)
      if ((col+row*N+N)<=(N**2-1)):
        swaps[3].append((col+row*N,col+row*N+N))
\end{python}
\end{center}
\caption{Python code to generate the swap layers (stored in a dictionary of lists of pairs) needed for a full shuffle of some $N\times N$ 2D grid architecture.}
\end{figure}

\begin{theorem}
For the $2$D grid of size $n=N^2$ the swap strategy of alternating $N-1$ steps of the line swap strategy to each row and swapping rows in parallel reaches full connectivity in no more than $\frac{1}{2}(N-2)(N+1)$ layers.
\end{theorem}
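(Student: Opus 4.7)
The plan is to verify the layer count first and then argue correctness by case analysis on qubit pairs. I would organize the schedule into $\frac{N-2}{2}$ cycles, each consisting of $N-1$ row-swap layers (the line-swap strategy from Theorem 1 applied in parallel to each row) followed by $2$ column-swap layers, giving $\frac{N-2}{2} \cdot (N+1) = \frac{(N-2)(N+1)}{2}$ layers total and matching the stated bound exactly when $N$ is even; a minor boundary adjustment would cover odd $N$.

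For correctness I would show that every pair of qubits becomes grid-adjacent at some layer. Pairs that ever share a row during some row-shuffle block are made adjacent by Theorem 1 applied to that row, since $N-1 \geq N-2$ is enough for a full line shuffle of length $N$. Pairs that share a column over the course of the schedule are handled analogously: the $2$ column-swap layers per cycle accumulate to $N-2$ layers across the $\frac{N-2}{2}$ cycles, again enough by Theorem 1 to achieve full line connectivity within each column. A key feature of the strategy is that the row-swap blocks apply the line-swap pattern with offset parity between even and odd rows (as visible in the conditional assignment to \texttt{swaps[0]} and \texttt{swaps[1]} in Figure \hyperref[code]{(7)}), so that during a single block qubits in neighboring rows can align into a common column and become vertically grid-adjacent. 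This extends the coverage of one cycle well beyond what fully synchronized row shuffles would allow.

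The main obstacle is the combinatorial covering argument: showing that across the full schedule every one of the $\binom{N^2}{2}$ pairs is brought into grid-adjacency at some layer. For cross-pairs whose rows and columns differ by more than one, this requires chaining the column-swap layers (which permute row-memberships between cycles) with the subsequent offset row-shuffle blocks. I would approach this by tracking each qubit's trajectory $(r(t), c(t))$ through the cycle structure and proving inductively that each additional cycle enlarges the set of covered pairs, until after $\frac{N-2}{2}$ cycles every pair has been accounted for. I expect this trajectory/covering lemma — rather than the layer count or the single-cycle adjacency analysis — to be the crux of the argument, and it is the step most likely to require the detailed column-swap scheduling described in Figure \hyperref[code]{(7)}.
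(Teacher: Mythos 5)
Your proposal has a genuine gap, and it is worth noting up front that the paper itself does not prove this theorem: it explicitly defers the proof to the cited references (the MAAPS paper \cite{maaps} and commutative SABRE \cite{commtSABRE}), and only restates the strategy. So the comparison here is really between your outline and a complete argument, and your outline stops exactly where the mathematical content begins. The layer accounting ($\frac{N-2}{2}$ cycles of $N-1$ row layers plus $2$ column layers) is fine, and you correctly identify the offset parity between even and odd rows as the mechanism that creates vertical adjacencies. But you then declare the covering argument --- showing that every one of the $\binom{N^2}{2}$ pairs is realized --- to be ``the crux'' and propose only to ``approach it'' by tracking trajectories and inducting on cycles. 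That crux \emph{is} the theorem; a proof that defers it is not a proof. In particular, the single-block alignment lemma (that during one row-shuffle block every qubit in row $i$ becomes column-aligned with every qubit in row $i\pm1$) is load-bearing and unproven in your write-up, and it is false without the parity offset, so it cannot be waved through.

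Your second paragraph also contains a step that does not go through as stated. You claim pairs sharing a column are ``handled analogously'' because the $2$ column layers per cycle accumulate to $N-2$ layers, ``enough by Theorem 1 to achieve full line connectivity within each column.'' Theorem 1 applies to a \emph{fixed} set of $n$ qubits undergoing consecutive alternating layers; here the column layers are interleaved with row shuffles that change which qubits occupy each column, so the accumulated column layers only give full connectivity of the \emph{rows as blocks} (every pair of rows becomes vertically adjacent at some configuration), not of the individual qubits that transiently share a column. Worse, the row-shuffle blocks occur only at every other row-configuration ($0,2,\dots,N-4$), which exposes at most $(N-1)\cdot\frac{N-2}{2}=\binom{N-1}{2}$ adjacent row pairs --- short of the $\binom{N}{2}$ required --- so the remaining row pairs must be covered either by the momentarily column-aligned qubits during the column layers themselves or by an alignment lemma reaching rows at distance greater than one. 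Your inductive covering plan will run directly into this counting tension, and resolving it is precisely the nontrivial content of the referenced constructions.
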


\begin{theorem}
For the busNNN architecture with $B$ buses the swap strategy that alternates between two swap layers, one between adjacent half columns not contained within the same bus, and one between half columns on the same bus except the $B$-th bus, reaches full connectivity in precisely $(4B-5)\lceil\frac{B}{B+1}\rceil$ layers. 
\end{theorem}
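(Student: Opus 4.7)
The plan is to follow the template of Theorems 1 and 2 and translate the line odd-even swap argument to the busNNN setting, where the role of ``nodes on a line'' is played by the $2B$ half-columns of the chip.

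First, I would fix notation: label the half-columns $1, 2, \ldots, 2B$ so that bus $b$ holds half-columns $2b-1$ and $2b$, and the point-to-point links connect half-column $2b$ to half-column $2b+1$ for $b \in \{1, \ldots, B-1\}$. The two swap layers are then:
\emph{Layer A} (intra-bus), which performs the bus-mediated swap of the two half-columns in each of buses $1, \ldots, B-1$; since a bus serializes two-qubit gates and each half-column contains four qubits, Layer A unfolds into exactly four parallel gate-level sublayers across the $B-1$ active buses.
\emph{Layer B} (inter-bus), which performs the swaps across the point-to-point links between half-columns $2b$ and $2b+1$; because these links are independent, Layer B costs a single gate layer.
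Iterating Layer A, Layer B, Layer A, Layer B, \ldots reproduces the odd-even transposition pattern at the half-column level, with the inactive $B$-th bus acting as the fixed right boundary.

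Next, I would prove the upper bound by lifting the line-case argument of Theorem 1 from qubits to half-columns. On a line of $2B$ half-columns, the odd-even swap brings every pair of half-columns into simultaneous occupancy of a common bus within the half-column-level round count dictated by Theorem 1 (using the boundary-fixing role of the $B$-th bus to save one round at the endpoint). Expanding each half-column round back into gate layers — four sublayers per Layer A and one layer per Layer B — and dropping the trailing Layer B that is unnecessary once the final Layer A has been executed, yields exactly $(4B - 5)$ gate layers for $B \geq 2$. The factor $\lceil B/(B+1) \rceil$ equals $1$ for every $B \geq 1$ and only serves to force the count to a sensible value in degenerate small cases.

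For optimality, I would use a diameter argument: the two half-columns initially at opposite ends of the line must eventually occupy the same bus, which requires at least $2B - 2$ half-column-level shifts. Each shift entails at least one Layer B gate layer, and the accumulated within-bus repositioning forces at least $4(B-1)$ Layer A sublayers, with one Layer B saved at the boundary, matching the upper bound.

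The hard part will be the boundary bookkeeping: verifying that the four sublayers of Layer A can indeed be executed simultaneously across the $B-1$ active buses without resource conflicts on shared qubits, and that the last Layer B in the schedule is genuinely unnecessary. This off-by-one accounting is precisely what produces the $-5$ correction in the final count, and it is the step I would expect to demand the most care.
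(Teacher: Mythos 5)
Your central accounting step is the wrong one, and it does not close even on its own terms. The theorem counts \emph{swap layers} (whole-column permutation steps), exactly as in Theorem~1; it does not count gate-level sublayers obtained by serializing the $|B|/2=4$ swaps a bus must perform. If the coefficient $4$ in $4B-5$ really came from unfolding an intra-bus swap into four serialized sublayers, the answer would depend on the bus size $|B|$ --- it does not. Moreover, your own arithmetic gives $4(B-1)+1\cdot(B-1)=5B-5$ gate layers for $B-1$ rounds of each type, and no honest off-by-one correction turns that into $4B-5$. The actual source of the count is different: an intra-bus swap layer leaves every bus holding the \emph{same unordered pair} of half-columns (it merely transposes them in place), so it creates no new bus co-occupancies; only the inter-bus layers produce new column pairings. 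Since each configuration realizes $B$ pairs and all $\binom{2B}{2}=B(2B-1)$ pairs are needed, the strategy must pass through $2B-1$ distinct configurations, which requires $2B-2$ inter-bus layers interleaved with $2B-3$ intra-bus layers, giving $(2B-2)+(2B-3)=4B-5$. This duplication between consecutive time steps is precisely what the paper's proof establishes by tracking the cyclic sequence $C(0,t)$ of columns visiting a fixed bus position and observing that only every other time step yields a fresh set of pairs.

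This also means your plan to ``lift Theorem~1 to a line of $2B$ half-columns'' undercounts at the half-column level: a literal application of Theorem~1 gives $2B-2$ layers, which is correct only if every layer creates new adjacencies, and here half of them do not. Two smaller issues: your remark that $\lceil B/(B+1)\rceil=1$ for all $B\ge 1$ is internally inconsistent with your claim that the factor repairs degenerate cases (the paper intends it to vanish at $B=1$, where no swaps are needed); and your diameter argument addresses optimality, which is the subject of the Corollary, not the ``precisely $(4B-5)$'' claim of the Theorem itself.
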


\begin{proof}
Without loss of generality label the $i$-th half bus at time $t'$ as $C(i,t')$ where indexing begins at $0$ and $t'$ is evolved by each swap layer. Denote half the qubits connected to a bus as a column and note that each column contains $|B|/2$ qubits. Following the initial swap layer,
\[C(i+1,1)=\left\{
\begin{array}{ll}
    C(i,0) + 1 & \dots (i\neq2B-1)\wedge (i\ \mbox{even}) \\
    C(i,0) - 1 & \dots (i\neq2B-1)\wedge (i\ \mbox{odd}) \\
    C(i,0) & \dots i=2B-1 \\
\end{array} 
\right.\]
Clearly no duplicate combinations of columns have occurred within a bus when comparing $t'=0$ and $t'=1$. This is not the case when comparing $t'>0$ odd and even. However, we can see that skipping every other time step yields a distinct set of pairs of columns aboard each bus. The sequence of columns that arrive at the $0$-th index at time $t=2t'-1$ with $t'>0$ is given by,
\[C(0,t)=\left\{
\begin{array}{ll}
    C(0,2(t-1)) & \dots t\leq B \\
    C(0,2(2B-t)-1) & \dots B<t<2B \\
\end{array} 
\right.\]
We can see that no spatially adjacent columns repeat at any moment in time when attached to the same bus by observing every $C(i,t)$ being identical across indices $i$ up to a phase of exactly one reverse time step. Since these sequences are cyclic they will repeat after $2B-1$ time steps. This, along with the initial mapping, ensures full connectivity between columns, and by extension qubits, is reached in 2B time steps. Transforming back to time being the number of swap layers we see full connectivity is reached in $4B-5$ layers. Still, the case of one bus requires no swaps so we insert a factor of 
\[\left\lceil\frac{B}{B+1}\right\rceil=\left\{
\begin{array}{ll}
    0 & \dots\ B=1 \\
    1 & \dots\ B>1 \\
\end{array} 
\right.\]
which gives the desired result. Code is given to generate this swap strategy in Figure \hyperref[busCode]{(8)}.
\end{proof}

\begin{figure}
\label{busCode}
\begin{center}
\begin{python}
if B > 1:
  for bus in range(1,B):
    for qbit in range(|B|/2*(2*bus-1),|B|*bus):
      swaps[1].append([qbit-|B|/2,qbit])
swaps[0] = [swap for swap in interbus_connections]
\end{python}
\end{center}
\caption{Python code to generate the swap layers (stored in a dictionary of lists of pairs) needed for a full shuffle of our busNNN architecture.}
\end{figure}

\begin{corollary}
The swap strategy for busNNN is optimal in terms of number of swap layers required to reach full connectivity.
\end{corollary}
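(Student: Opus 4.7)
The plan is to prove the lower bound via a counting argument on column pairs, paired with a careful analysis of how many swap layers are needed to transition between bus-configurations. The upper bound is already furnished by Theorem 3, so optimality reduces to showing that no strategy can achieve full connectivity in fewer than $(4B-5)\lceil B/(B+1)\rceil$ swap layers.

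First I would set up the column-pair counting argument. Treat the $2B$ columns (half-buses of $|B|/2$ qubits each) as atomic units. For full qubit connectivity, every pair of columns must be co-located on some bus at some moment, since two qubits can only directly interact when sharing a bus. At any instant, each of the $B$ buses hosts exactly 2 columns, contributing exactly one co-located column pair, so each distinct bus-configuration witnesses at most $B$ pairs. Covering all $\binom{2B}{2}=B(2B-1)$ pairs therefore requires at least $2B-1$ distinct configurations to appear during the execution.

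Next I would quantify the swap layers needed to transition between configurations. Each transition requires at least one inter-bus swap layer and, except in degenerate cases, one intra-bus swap layer, since an inter-bus swap alone preserves which columns sit on which bus, and an intra-bus swap alone merely permutes columns already co-located. Taking the initial configuration as free, the remaining $2B-2$ configurations naively require $2(2B-2)=4B-4$ layers; tightening this by identifying that the terminal configuration does not require a trailing intra-bus layer (the source of the $-5$ rather than $-4$) and handling $B=1$ via the ceiling factor $\lceil B/(B+1)\rceil$ would then match the Theorem 3 upper bound exactly.

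The hardest step, I expect, will be ruling out non-column-respecting strategies: a priori an adversarial compiler could shuffle individual qubits across buses in ways that do not preserve the column decomposition, potentially mixing pairings inside a bus while simultaneously routing qubits between distant buses. To close this gap I would argue by a symmetry reduction — since the $|B|/2$ qubits within a column are fully interchangeable with respect to the all-to-all intra-bus connectivity, any swap strategy can be conjugated by intra-bus permutations into one that treats half-buses as atomic units without increasing the layer count. Once this reduction is established, the column-pair counting argument closes the gap and yields the claimed optimality.
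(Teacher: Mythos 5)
You should first know that the paper offers no proof of this corollary at all: it is asserted immediately after Theorem~3 with the lower bound left implicit, so there is nothing to match your argument against and the burden falls entirely on your proposal. Your opening count is the right starting point and is surely the intended one: with $2B$ half-bus columns, all $\binom{2B}{2}=B(2B-1)$ column pairs must be co-located on a bus at some time, each configuration realizes exactly $B$ of them, so at least $2B-1$ configurations are needed. The genuine gaps are in the two steps that convert this into the exact layer count $(4B-5)$.

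First, your transition argument rests on a backwards premise. An inter-bus swap layer does \emph{not} preserve which columns sit on which bus --- it is the only move that changes the column-to-bus assignment, exchanging the contents of adjacent half-columns belonging to different buses across the fixed point-to-point links; it is the intra-bus layer that leaves the assignment unchanged. The real reason a run of inter-bus layers cannot keep producing new configurations is that those links form a fixed perfect matching between adjacent half-columns, so the full inter-bus layer is an involution: applied twice it returns to the previous configuration. Even with that corrected, a valid lower bound must also handle strategies that apply inter-bus swaps on \emph{different subsets} of the links in different layers, or that mix intra- and inter-bus swaps within a single layer; your two-layers-per-transition accounting does not rule these out, and without it you only get a bound of roughly $2B-2$ layers, not $4B-5$. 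Second, the reduction to column-respecting strategies is, as you correctly flag, the crux --- and it cannot be sidestepped, because the analogous counting at the qubit level (each configuration realizes $B\binom{|B|}{2}$ of the $\binom{B|B|}{2}$ qubit pairs) gives a lower bound of only $\bigl\lceil (B|B|-1)/(|B|-1)\bigr\rceil$ configurations, which is strictly smaller than $2B-1$ once $B\ge 3$. Your proposed fix, conjugating by intra-bus permutations, does not address strategies that move only part of a column across a link in a given layer, after which there is no column decomposition left to conjugate onto. Until that reduction is actually established, the optimality claim remains open in your write-up, just as it is left unproved in the paper.
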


For completeness, we now address the scaling of each coupling map's required number of swaps to reach full connectivity. The line strategy requires \[L_{SWAP}=\frac{1}{2}(n-1)(n-2).\] The $N\times N$ $2$D grid requires a maximum of
\tiny
\[G_{SWAP}=\left\{
\begin{array}{ll}
    \left\lceil\sqrt{N}\right\rceil\left(\left\lceil\sqrt{N}\ \right\rceil + 1\right)\left\lceil\frac{\left\lceil\sqrt{N}\ \right\rceil}{2}\right\rceil\left\lceil\frac{2\left\lceil\sqrt{N}\ \right\rceil - 3}{4}\right\rceil \dots \left\lceil N\right\rceil \mbox{odd} \\ \\
    \left(\frac{\left\lceil\sqrt{N}\ \right\rceil}{2}\left(\left\lceil\sqrt{N}\ \right\rceil + 1\right)^2+2\left\lceil\sqrt{N}\right\rceil\left\lceil\frac{2\left\lceil\sqrt{N}\ \right\rceil - 3}{4}\right\rceil\right)\left\lceil\frac{\left\lceil\sqrt{N}\ \right\rceil}{2}\right\rceil \dots \mbox{else} \\
\end{array} 
\right. \]
\normalsize
Finally, the busNNN topology requires \[B_{SWAP}=\frac{|B|}{2}\left\lceil\frac{n-|B|}{|B|}\right\rceil\left(\frac{|B|}{2}\left\lceil\frac{n-|B|}{|B|}\right\rceil - 1\right).\] 

The scaling of these swap counts has a direct impact on the number of entangling gates. However, their relation to the depth is more clear in the number of layers as mentioned in theorems $(1-3)$. Note that there are instances where optimization can still be applied to lower this overhead. Such a case is when the entangling gateset contains a sole CNOT gate. Then, in the busNNN architecture, one can match swap gates and entangling gates operating on the same qubits to reduce the total number of two-qubit gates used for the combined operation \hyperref[mikeIke]{\cite{mikeIke}}. It is worth reiterating just how important deterministic algorithms, like the full shuffle, can be. They are easy to compute and thus scale better than heuristic algorithms at larger problem sizes. This transition from heuristic dominance to breakdown may well be within the range of NISQ devices. The main question we should then address is at what regime does this transition occur?
\section{Exploring the Design Space}
\label{dsp}

In this section we present the main results. We first compare all available compilers for a set of problem instances with the same coupling maps. We will utilize the number of entangling gates and scaled scheduled time as our metrics. Scaled scheduled time is the scheduled time when a single qubit gate takes $1$ unit of time and an entangling gate takes $10$. As mentioned before, we compile to the gateset of CNOT plus X and Z single qubit rotations. This is to ensure we are comparing compilers and coupling maps and not the idiosyncrasies of the gatesets of particular architectures. Because our random graphs result in different adjacency matrices, we generate $100$ graph instances for each configuration. Each circuit is compiled on a single Xeon-p8 processor maintained by the MIT LLSC team \hyperref[grid]{\cite{grid}}. The ``baseline" plot represents a fully connected architecture requiring no swaps or compiling passes.

\begin{figure*}[!htb]
\label{busNNNlinegrid}
    \minipage{0.32\textwidth}
        \centering
        \includegraphics[width=0.95\textwidth]{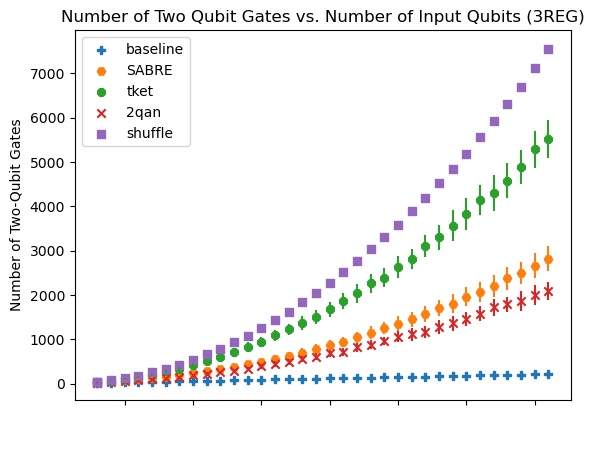}
        \includegraphics[width=0.95\textwidth]{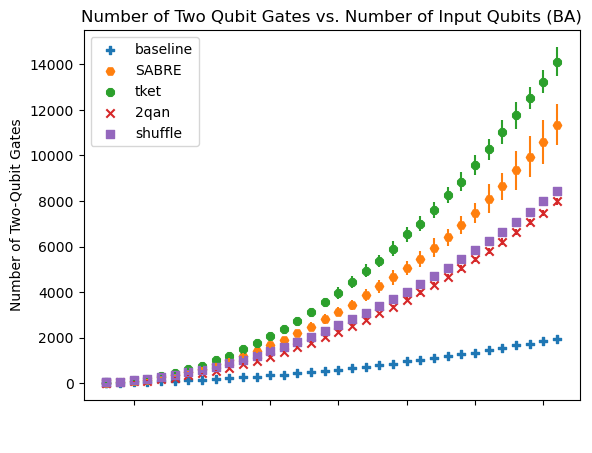}
        \includegraphics[width=0.95\textwidth]{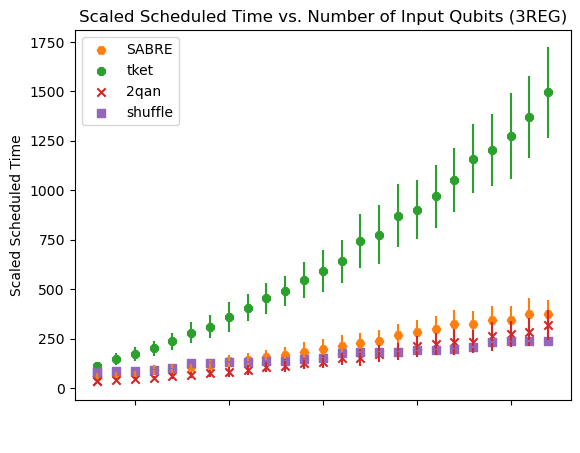}
        \includegraphics[width=0.95\textwidth]{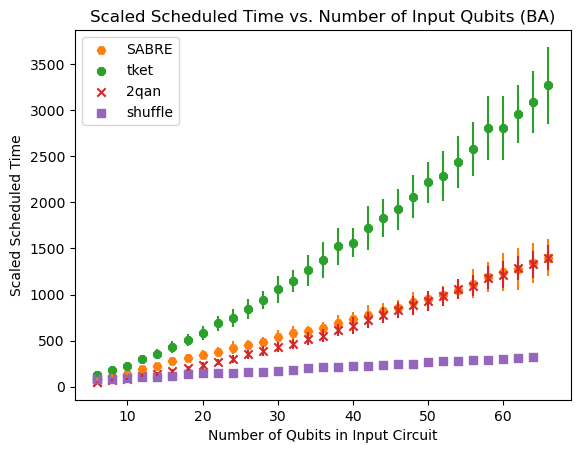}
    \endminipage\hfill
    \minipage{0.32\textwidth}
        \centering
        \includegraphics[width=0.95\textwidth]{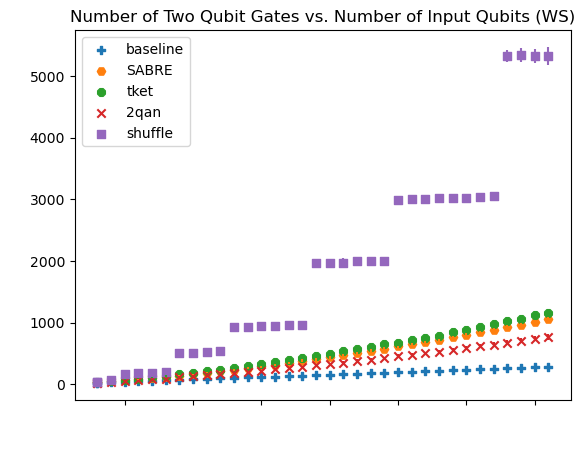}
        \includegraphics[width=0.95\textwidth]{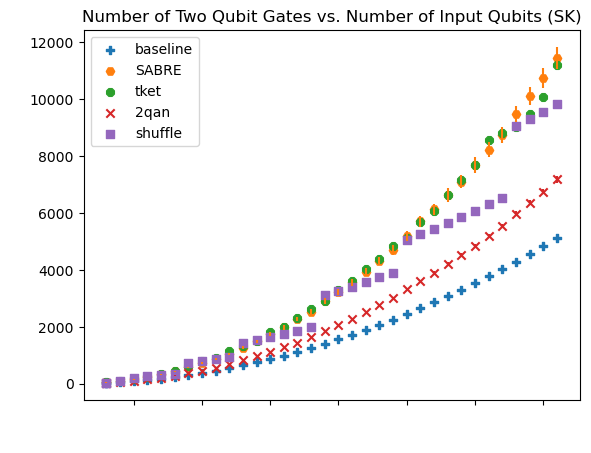}
        \includegraphics[width=0.95\textwidth]{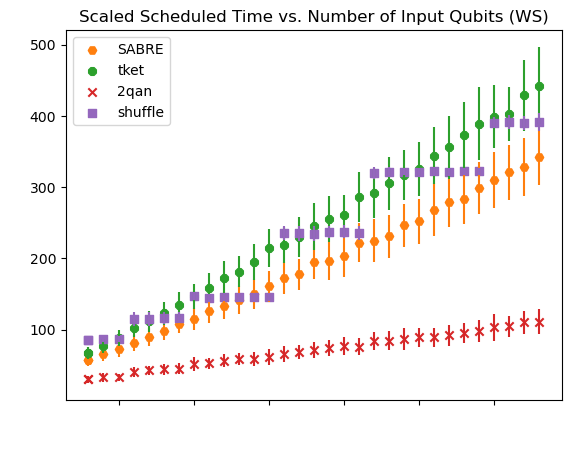}
        \includegraphics[width=0.95\textwidth]{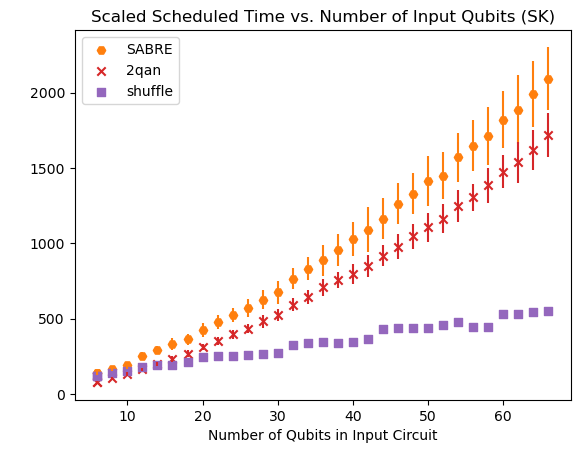}
    \endminipage\hfill
    \minipage{0.32\textwidth}
        \centering
        \includegraphics[width=0.95\textwidth]{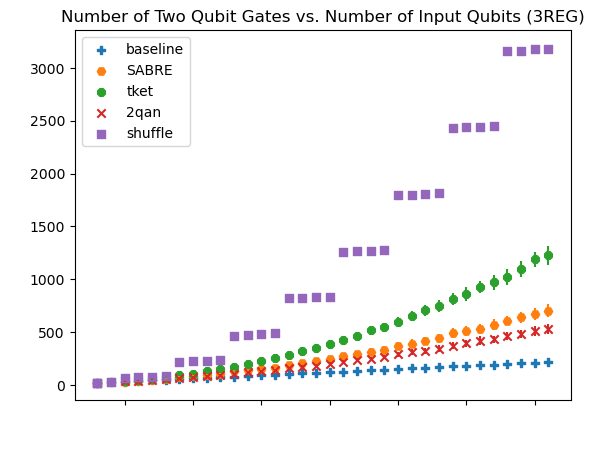}
        \includegraphics[width=0.95\textwidth]{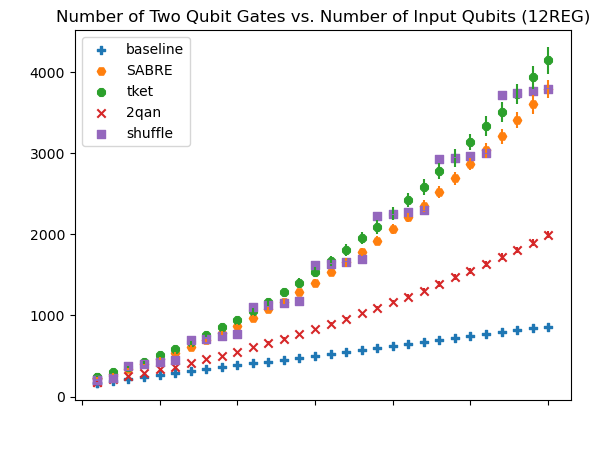}
        \includegraphics[width=0.95\textwidth]{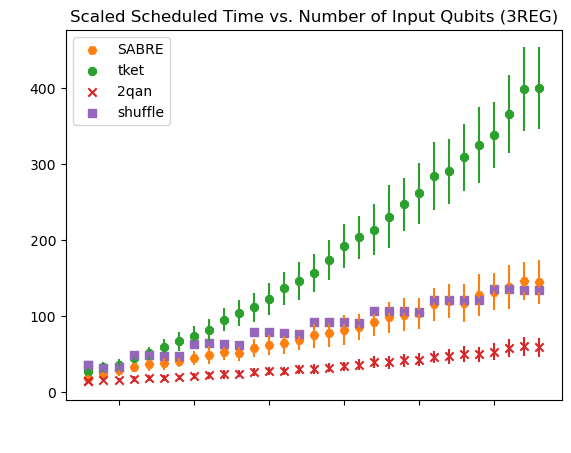}
        \includegraphics[width=0.95\textwidth]{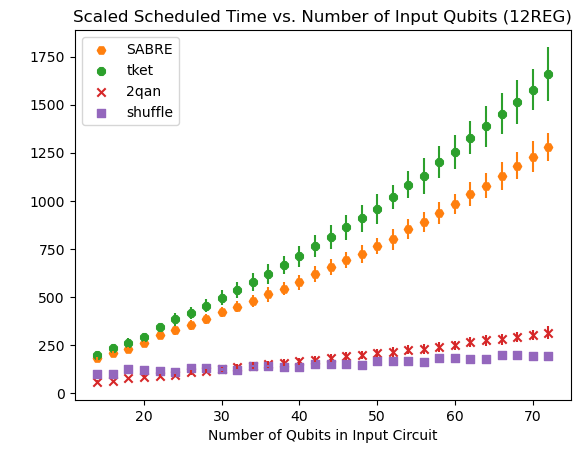}
    \endminipage\hfill
    \caption{The left column contains problem instances compiled on a line-type architecture. The middle and right are 2D grid and busNNN respectively. The type of problem instance is chosen to be representative of a wide variety of scaling. Tket is omitted from the bottom middle graph because its counts are so large that the rest of the data is obscured. Each graph contains data points for $6\leq q\leq72$ qubits}.
\end{figure*}

In Figure \hyperref[busNNNlinegrid]{(9)}, we see that although 2QAN does better in all cases with respect to the number of two-qubit gates. Although, the margin of that difference varies significantly. While shuffle's scaling is deterministic, it cannot compete with any heuristic when applied to a sparse graph such as 3REG. As system size increases the density of regular graphs decreases via the relation $D=d/(n-1)$ where $d$ is the degree of regularity. We can then make some qualitative assertions on the performance of the given compilation tools. \textit{Shuffle becomes better than 2QAN with larger and denser problem instances.} Using scheduled time as a metric yields significantly different results. While the above italicized statement is still true, the threshold where shuffle performs better than all heuristics is within the demonstrated data: convincingly within the NISQ regime. Broadly, SABRE performs better than Tket and takes significantly less compute time to run. 

\begin{figure}[!htb]
\label{timing}
    \centering
    \includegraphics[width=0.375\textwidth]{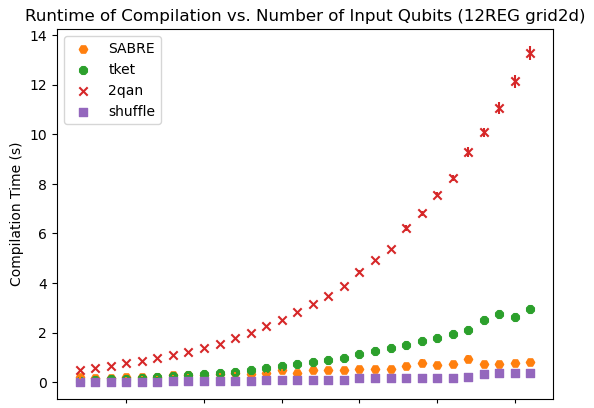}
    \includegraphics[width=0.35\textwidth]{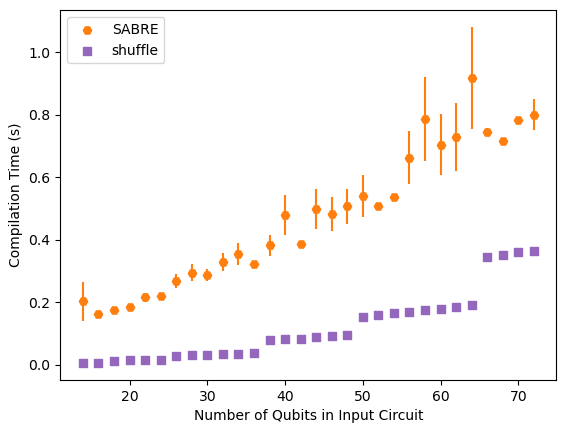}
    \hfill
    \caption{Timing data measuring the compile time of each compilation tool. 2QAN quickly becomes unmanageable. The bottom graph omits 2QAN and Tket to make the scaling of SABRE and shuffle clear.} 
\end{figure}

In Figure \hyperref[timing]{(10)}, we give the compile time for each tool for the compilation of a 12REG graph. When applied to larger and denser graphs, 2QAN can take days or even weeks of compute time to compile a single circuit. When the actual scheduled time of these quantum circuits would be on the order of microseconds, this scaling is unacceptable. This is why shuffle was proposed. Nonetheless, while there is a scaling guarantee of the full shuffle method it does not ever reach the desired logarithmic overhead in any practical setting. 

\begin{figure*}[!htb]
\label{72}
    \minipage{0.32\textwidth}
        \centering
        \includegraphics[width=\textwidth]{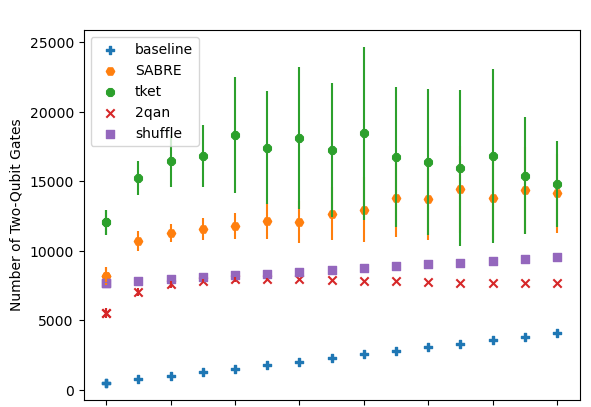}
        \includegraphics[width=\textwidth]{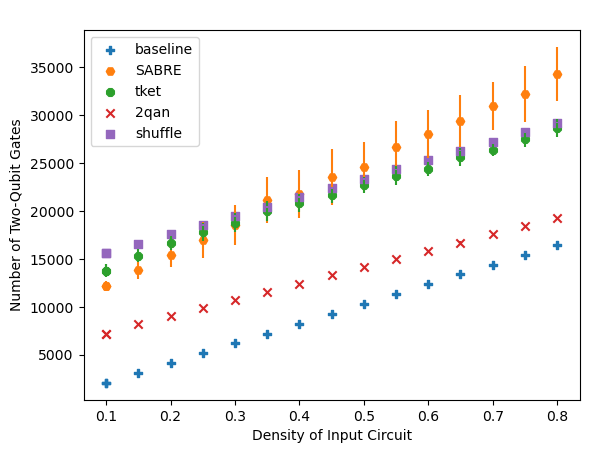}
    \endminipage\hfill
    \minipage{0.32\textwidth}
        \centering
        \includegraphics[width=\textwidth]{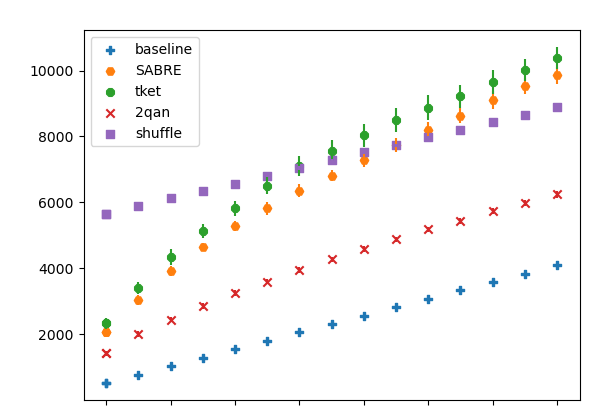}
        \includegraphics[width=\textwidth]{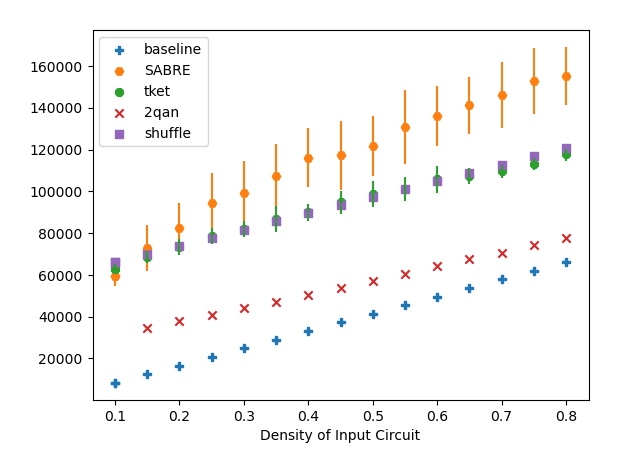}
    \endminipage\hfill
    \minipage{0.32\textwidth}
        \centering
        \includegraphics[width=\textwidth]{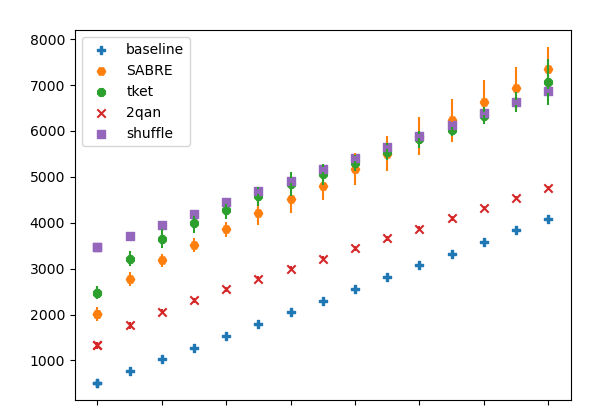}
        \includegraphics[width=\textwidth]{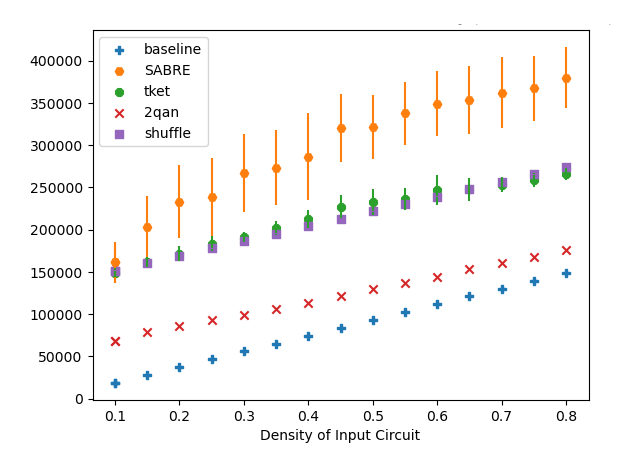}
    \endminipage\hfill
    \caption{How entangling gate count scales with problem instance density for random graphs. From left to right, the top row depicts the line, grid, and busNNN topologies all with circuit width $n=72$. The bottom row depicts the busNNN architecture with $n=144,288,432$ qubits from left to right.} 
\end{figure*}

Aside from taking weeks to compile a single circuit, scaling to larger problem sizes does not have much of an effect on the qualitative behavior we observe at smaller problem instances. One meaningful difference is that SABRE seems to do worse for larger and denser circuits. Along that same vein, we can see that shuffle does become better than both Tket and SABRE past some threshold, as conjectured above. This happens at larger densities for more connected architectures. This fits with the scaling we derived in Section \hyperref[compile]{(IV)}. It is clear that 2QAN consistently outperforms other heuristics as it is taking advantage of optimization invisible to the other compiling methods. So far, we have only discussed hypothetical architectures. The three commercial architectures introduced in Section \hyperref[arch]{(III)} were Google's Sycamore, IBM's heavyhex, and Rigetti's Aspen. We do not derive the swap layers required to apply the shuffle compilation method to these topologies. One simple strategy could be to unwind Rigetti's Aspen and IBM's heavyhex architectures into a line and apply the line swap strategy (with slight correction terms in heavyhex' case). A slightly modified version of the 2D grid swap strategy could be applied to Google's Sycamore. Nonetheless, the three heuristic algorithms' data is compared in Table \hyperref[tab4]{(IV)}. The layered Sycamore chip is also included.

\begin{table}
\label{tab4}
    \centering
    \begin{tabular}{ |p{2.2cm}||p{1.05cm}|p{1.05cm}|p{1.05cm}|p{1.05cm}|p{1.05cm}}
        \hline
        \multicolumn{5}{|c|}{Google's Sycamore} \\
        \hline
        \centering
        \textit{Entangling Gates} & 18q & 36q & 54q & 72q \\
        \hline
        \centering
        SABRE & 174$\pm$13 & 460$\pm$32 & 833$\pm$44 & 1286$\pm$69 \\
        \hline 
        \centering
        Tket & 179$\pm$14 & 489$\pm$34 & 884$\pm$55 & 1435$\pm$84 \\
        \hline 
        \centering
        2QAN & 133$\pm$13 & 349$\pm$27 & 614$\pm$43 & 966$\pm$56 \\
        \hline
        \centering
        \textit{Scaled Sched Time} & & & & \\
        \hline
        \centering
        SABRE & 944$\pm$134 & 1827$\pm$260 & 2838$\pm$339 & 3991$\pm$509 \\
        \hline 
        \centering
        Tket & 1124$\pm$147 & 2309$\pm$312 & 3617$\pm$364 & 5207$\pm$586 \\
        \hline 
        \centering
        2QAN & 480$\pm$79 & 786$\pm$133 & 1095$\pm$185 & 1435$\pm$259 \\
        \hline
        \multicolumn{5}{|c|}{Layered Sycamore} \\
        \hline
        \centering
        \textit{Entangling Gates} & & & & \\
        \hline
        \centering
        SABRE & 157$\pm$11 & 394$\pm$24 & 687$\pm$35 & 1040$\pm$48 \\
        \hline 
        \centering
        Tket & 142$\pm$10 & 369$\pm$25 & 659$\pm$39 & 1002$\pm$60 \\
        \hline 
        \centering
        2QAN & 121$\pm$11 & 295$\pm$21 & 508$\pm$31 & 766$\pm$42 \\
        \hline
        \centering
        \textit{Scaled Sched Time} & & & & \\
        \hline
        \centering
        SABRE & 868$\pm$117 & 1603$\pm$225 & 2417$\pm$298 & 3267$\pm$424 \\
        \hline 
        \centering
        Tket & 918$\pm$114 & 1861$\pm$231 & 2793$\pm$289 & 3852$\pm$420 \\
        \hline 
        \centering
        2QAN & 428$\pm$79 & 653$\pm$115 & 879$\pm$140 & 1066$\pm$171 \\
        \hline
        \multicolumn{5}{|c|}{IBM's Heavy Hex} \\
        \hline
        \centering
        \textit{Entangling Gates} & & & & \\
        \hline
        \centering
        SABRE & 224$\pm$21 & 606$\pm$46 & 1095$\pm$62 & 1658$\pm$91 \\
        \hline 
        \centering
        Tket & 244$\pm$26 & 655$\pm$54 & 1188$\pm$82 & 1819$\pm$115 \\
        \hline 
        \centering
        2QAN & 172$\pm$19 & 450$\pm$38 & 817$\pm$58 & 1259$\pm$87 \\
        \hline
        \centering
        \textit{Scaled Sched Time} & & & & \\
        \hline
        \centering
        SABRE & 1259$\pm$183 & 2634$\pm$358 & 4017$\pm$472 & 5603$\pm$685 \\
        \hline 
        \centering
        Tket & 1395$\pm$170 & 2932$\pm$349 & 4756$\pm$508 & 6560$\pm$812 \\
        \hline 
        \centering
        2QAN & 603$\pm$118 & 1055$\pm$185 & 1478$\pm$245 & 1932$\pm$322 \\
        \hline
        \multicolumn{5}{|c|}{Rigetti's Aspen} \\
        \hline
        \centering
        \textit{Entangling Gates} & & & & \\
        \hline
        \centering
        SABRE & 206$\pm$22 & 578$\pm$45 & 1072$\pm$64 & 1681$\pm$90 \\
        \hline 
        \centering
        Tket & 202$\pm$15 & 635$\pm$48 & 1205$\pm$95 & 1967$\pm$106 \\
        \hline 
        \centering
        2QAN & 149$\pm$17 & 433$\pm$38 & 830$\pm$65 & 1307$\pm$72 \\
        \hline
        \centering
        \textit{Scaled Sched Time} & & & & \\
        \hline
        \centering
        SABRE & 1048$\pm$164 & 2198$\pm$307 & 3534$\pm$403 & 5052$\pm$659 \\
        \hline 
        \centering
        Tket & 1152$\pm$136 & 2720$\pm$295 & 4470$\pm$415 & 6810$\pm$631 \\
        \hline 
        \centering
        2QAN & 529$\pm$96 & 1025$\pm$185 & 1696$\pm$292 & 2400$\pm$392 \\
        \hline
    \end{tabular}
    \vspace{10pt}
    \caption{Data averaged over 100 randomized problem instances of Watts Strogatz (WS). Values are presented as average $\pm$ standard deviation.}
\end{table}

We now turn to the performance of the circuits utilizing the different coupling maps. In Figure \hyperref[BASK]{(12)} we show a comparison 
across the different architectures.
A few striking observations may be made. For one, the ordering of lowest entangling gate count or smallest scaled scheduled time is not constant across problem instances. In fact, this list shifts significantly. The line architecture does not perform well in 12REG but does in Watts Strogatz and Sherrington-Kirkpatrick. This is interesting but may be understood from the fact that both the heavy hex and Aspen topologies are not much more than lines themselves with a few added connections. So, when the graph is dense enough the line architecture performs comparable to both commercial architectures. 

Another interesting takeaway is that added complexity  associated with non-planar coupling maps is not always worthwhile. For a fully connected graph the busNNN architecture outperforms all other architectures by a wide margin. However, for that same model the layered Sycamore chip and a generic 2D grid provide practically identical overheads. For the other two problem instances shown in Figure \hyperref[BASK]{(12)} the gap between busNNN and other architectures exists but is not significant in the regime of circuit widths shown. In these cases, using a simple 2D grid (which is easier to fabricate) would be acceptable. The main takeaway from these observations is that making blanket statements about one architecture's utility versus another means very little without clarifying what problems one will be working with. Even in the context of 2-local Hamiltonians the variation between problem instances is nontrivial.

\begin{figure*}[!htb]
\label{BASK}
    \minipage{0.32\textwidth}
        \centering
        \includegraphics[width=0.95\textwidth]{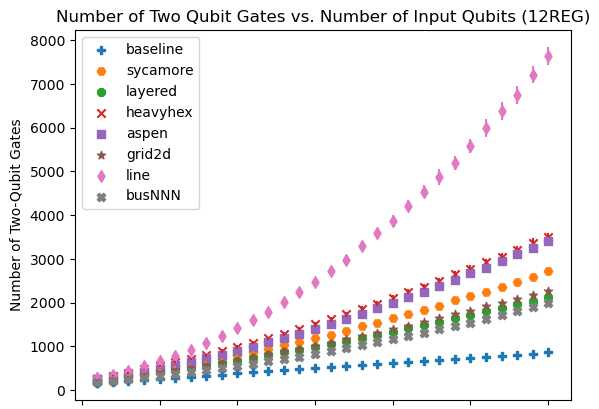}
        \includegraphics[width=0.95\textwidth]{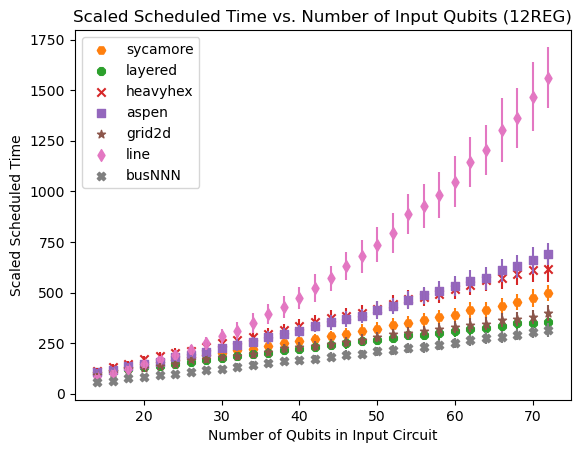}
    \endminipage\hfill
    \minipage{0.32\textwidth}
        \centering
        \includegraphics[width=0.95\textwidth]{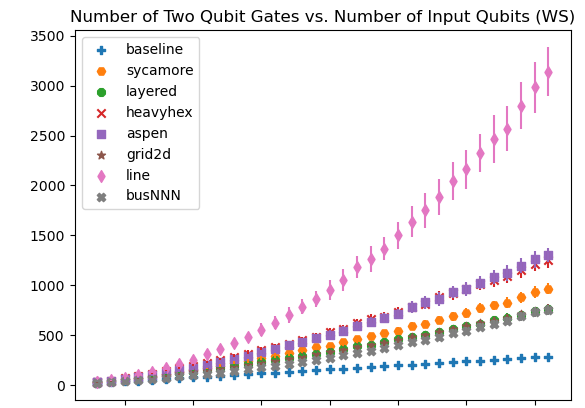}
        \includegraphics[width=0.95\textwidth]{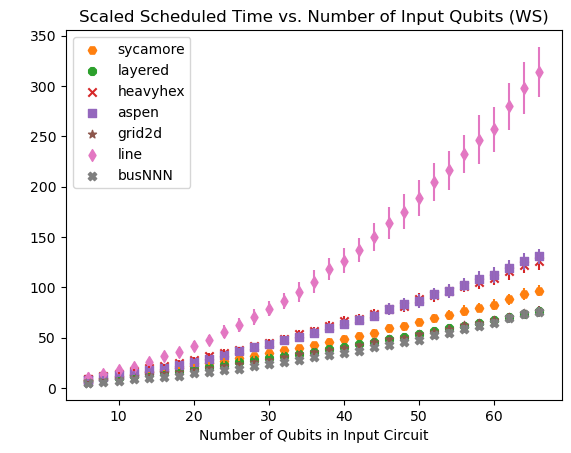}
    \endminipage\hfill
    \minipage{0.32\textwidth}
        \centering
        \includegraphics[width=0.95\textwidth]{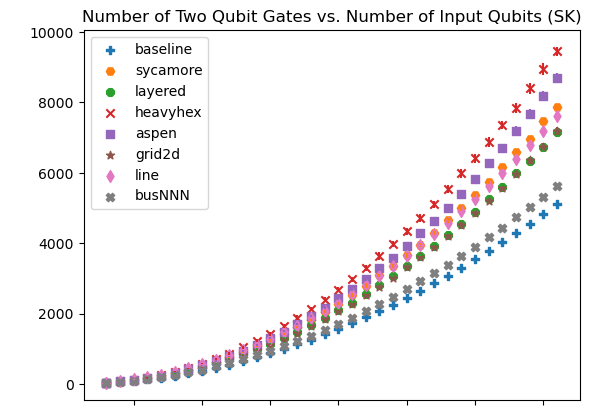}
        \includegraphics[width=0.95\textwidth]{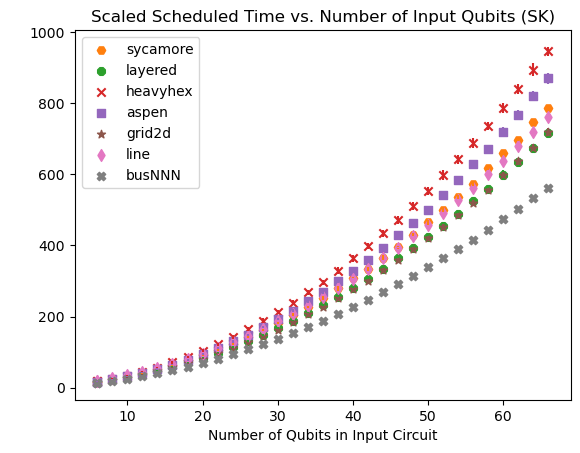}
    \endminipage\hfill
    \caption{Comparison of all architectures, using the best compiler for each. From left to right we have 12REG, WS, and SK problem instances. The top row has two qubit gates as the chosen metric and the bottom row is using scaled scheduled time.} 
\end{figure*}
\section{Conclusion}
\label{conclusion}

It is clear that any design decisions considering coupling maps of superconducting NISQ architectures must be carefully weighed with data spanning compiling tools and problem instances. We have presented benchmarks of the performance of various such tools across both coupling maps and problem instances we believe to be commercially and scientifically relevant. The newly proposed busNNN topology generally has the best performance in terms of the given metrics: entangling gate count and scaled scheduled time. Similarly, 2QAN consistently outperforms other heuristic compiling algorithms while the deterministic shuffle method of compilation outperforms all heuristics past some threshold connectivity and problem instance density. However, our benchmarking uses specific problem instances and gatesets. Future work can expand upon our choices and test other algorithms beyond 2-local Hamiltonians. This may require further research or even development of algorithms that can be run on NISQ hardware. It is an open question which algorithms make sense to run on near-term devices both in terms of size (width/depth) and competitiveness with classical algorithms.

Future work can also investigate how to bridge the performance gap of heuristic compiling algorithms in terms of our chosen metric and the scaling of deterministic compiling algorithms. Commutative SABRE has the option to shuffle an architecture to some desired mixing coefficient then apply SABRE to the resulting circuit. The impact of initial mapping on this strategy is not well understood. It is still possible to combine deterministic and heuristic algorithms in more clever ways; for example, to take advantage of the hierarchical nature of the busNNN architecture which we specifically designed to have a provably optimal swap strategy. Any progress towards lowering the overhead of compilation tools and their time scaling is a worthwhile pursuit. 

The results presented in this study give evidence for utilizing QAOA (or recently optimized variations of QAOA) on certain problem graphs, on specific coupling maps compiled by distinct tools. The obvious goal of the work is to inform actual experiments on which choices to make in regards to architectures for superconducting NISQ devices. The methodology and tools developed for this work can be applied to new problem instances and architectures of interest in the future.

\section{Acknowledgment}

We would like to thank Arthur Kurlej for his help with software development as well as previous work related to benchmarking hardware \cite{line}. We thank John Blue for his insightful comments related to the swap strategy derived for the busNNN architecture.

\bibliographystyle{ieeetr}
\bibliography{main}

\end{document}